\documentclass{amsart}
\usepackage[utf8]{inputenc}
\usepackage[english]{babel}
\usepackage{booktabs}
\usepackage{amsmath,dsfont,amsfonts}

\date{May 2020}
\newtheorem{teo}{Theorem}[section]

\newtheorem{rem}[teo]{Remark}

\usepackage[T1]{fontenc}
\usepackage{xcolor}
\usepackage{amsmath}
\usepackage{graphicx}
\usepackage{esint}
\usepackage{babel}
\usepackage{comment}
\usepackage{enumitem}
\setlist[enumerate]{leftmargin=5.5mm}

\voffset-.5cm
\textheight21cm
\textwidth15cm
\oddsidemargin.35cm
\evensidemargin.35cm

\usepackage{hyperref}
\hypersetup
{   bookmarks=true,
	colorlinks=true,
	linkcolor=blue,
	anchorcolor = blue,
	urlcolor=blue,
	menucolor=blue,
	citecolor=black,
	pdfauthor = {Etienne Wasmer},
	pdftitle = {mortality containment vs. economics},
	pdfcreator = {LaTeX, TeXmaker, pdfLaTeX, Hyperref}
}

\usepackage{floatrow}
\floatsetup[table]{font=scriptsize}
\floatsetup[table]{capposition=top,captionskip=10pt}

\begin{document}
\title[Mortality containment vs. Economics Opening]{Mortality containment vs. Economics Opening:
{\\ \small Optimal policies in a SEIARD model} }
\author[A. Aspri, E. Beretta, A. Gandolfi, E. Wasmer]{Andrea Aspri}
\address{ Johann Radon Institute for Computational and Applied Mathematics (RICAM)}
\email{andrea.aspri@ricam.oeaw.ac.at}

\author[]{Elena Beretta}
\address{Department of Mathematics, NYU-Abu Dhabi, and Dipartimento di Matematica, Politecnico di Milano}
\email{eb147@nyu.edu}

\author[]{Alberto Gandolfi}
\address{Department of Mathematics, NYU-Abu Dhabi}
\email{albertogandolfi@nyu.edu}

\author[]{Etienne Wasmer}
\address{Department of Economics, Social Science Div. NYU-Abu Dhabi}
\email{ew1555@nyu.edu}

\maketitle

\begin{abstract}

We adapt a SEIRD differential model with asymptomatic population and Covid deaths, which we call SEAIRD, to simulate the evolution of COVID-19, and add a  control function affecting both the diffusion of the virus and GDP, featuring all direct and indirect containment policies; to model
 feasibility, the control is assumed to be a  piece-wise linear function satisfying additional constraints. We describe the joint dynamics of infection and the economy and discuss the trade-off between production and fatalities. In particular, we carefully study the conditions for the existence of the optimal policy response and its uniqueness. Uniqueness crucially depends on the marginal rate of substitution between the statistical  value of a human life and GDP; we show  an example with a phase transition: above a certain threshold, there is a unique optimal containment policy; below the threshold, it is optimal to abstain from any containment; and at the threshold itself there are two optimal policies. We then explore and evaluate various profiles of various control policies dependent on a small number of parameters.


\end{abstract}
\newpage

\tableofcontents

\newpage

\section{Introduction}

A COVID-19 outbreak has begun in China at the end of 2019
\cite{HWO}, later spreading  to most other countries and causing 
a large number of infected individuals and deaths. In Italy, the first
country to be hit after China, the first confirmed
autochthonous case was recorded on February 21, and the first death on February 22 \cite{Stat2020};
the first death in US was recorded on February 28th \cite{NYT}.\footnote{These are the officially recorded dates, and the virus
might have been spreading before these times; we
record them here as references for the actual dates
we will use in simulations.}
The outbreak has so far caused at least $4$ million recorded
cases, and $275,000$ recorded deaths \cite{WM},
with real numbers estimated at much higher values.
In New York City there have been at this time at least 
$27,000$ deaths, corresponding to $0.335\%$ of the 
population. Massive regulatory responses have been put in place by most local and central governments, imposing restrictions (that we call lockdown hereafter) on travels and individual freedom. By several measures, the lockdowns have reduced the
spread of the virus and  the potential mortality.
On the other hand, the intensity of the impact of the pandemic, the lockdown policies, and  the behavioral response of agents beyond the regulations (spontaneous social distancing, etc.),
have greatly impacted the economic production. As of May 7, 2020, the IMF economic projection predict a loss in real GDP in 2020 of 3\% worldwide, as opposed to +3.45\% in the four years before (2016-2019). Even with the IMF forecast for the rebound of 5.8\% in 2021, the cumulated loss relative over the next two years \textit{relative to the trend} would be about 4\% of World GDP. In the advanced economies, this loss would be 5.65\%, including 6.85\% in the European Union and 5.8\% in the United States, and lower numbers in Asia and Pacific (-3.35\%) or Sub-Saharan Africa (-2.9\%). These are massive numbers, quite different by areas of the world, and updated regularly with likely higher GDP losses.

It is imperative for most regulatory bodies to balance
between the containment of the effects of the outbreak, and
the economic impact of the regulatory measures.
In this paper we adopt the number of COVID-19 fatalities
and the total GDP as proxies for the two effects,
and provide a framework to think about costs and benefits.
The two indicators
 have been selected for their reliability:
 GDP is a standard economic indicator,
 while mortality, in particular total mortality
 and its comparison with the expected mortality from
 previous years, is regularly monitored and 
 made public in many countries.
These assumptions allow to determine optimal lockdown policies
using optimal control theory.

More specifically, we consider a proxy for containment policies that encompasses the entire set of behavioral responses of agents who reduce consumption, the shut-down of markets themselves and measures that limit
people's movements, thus reducing the chances of
infection and the availability of labor. We then introduce the cost of a Covid related death for the social planner; for each intervention policy tuned
by a control function, we estimate a loss functional combining 
total Covid related fatalities and overall production loss in a given time frame.

The evolution of the epidemics is then described by a SEAIRD 
ordinary differential equations model, as specified in 
Section \ref{epidemic model} where a sizeable fraction of the population are asymptomatic individuals who can contaminate others. At each time $t$,
the lockdown is measured by an opening level of society (economic activity and social contacts) $c(t)
\in [0,1]$, $c=1$ being absence of any restriction and 
$c=0$ being the complete shutdown of all activities.\footnote{As a normalization, $c$ will be assumed to linearly affect  the infection rate and has a concave effect on GDP, see infra.}  

Many papers in the recent literature, including \cite{GKK}
 and various economic papers cited in Section \ref{LitRev} below,
compute the optimal policy in a general class with only technical 
restrictions on the policy space; but
this contrast with feasibility of the restriction policies,
which cannot adjust continuously: more realistically \cite{YZ},
restriction measures require a short time to 
be implemented, and then should be kept constant
for a certain time. For these reasons, 
we drastically reduce the dimensionality of 
the policy space, by taking controls
which are constant for some minimal period $\overline \delta$, and then transition linearly to the next level
in time $\underline \delta$. 

The main point of our study is that one can 
find the various opening levels that avert a sizeable number
of deaths without determining an excessive damage to the economy:
Figure \ref{onelock} illustrates the potentials of
this analysis, in that deviations from the best
policies either cause an excessive economic loss for
a residual decrease in death rate, 
or an undesirably high mortality to prevent a rather
minor decrease in GDP. See for instance \cite{kaplan2020pandemics} for a similar assessment of the trade-offs involved, implicit or explicit in most economic works discussed in next Section. The darker blue curve in Figure \ref{onelock} reflects the constrained relation between mortality and GDP for different values of the control policy and can be thought as a technical rate of transformation. As we will explain later, it is generally preferable to be closer to the origin. A social welfare function and its indifference curves as in the light blue curve defines an optimal rule - when it exists. Its slope reflects the marginal rate of substitution (MRS) between mortality reduction and GDP losses and under simplifying assumptions, is the inverse of the statistical value of life, as we will explain later.\color{black}

 \vspace{0.4cm}
\begin{figure}[h!] \label{onelock}
    \centering
\includegraphics[scale=0.65]{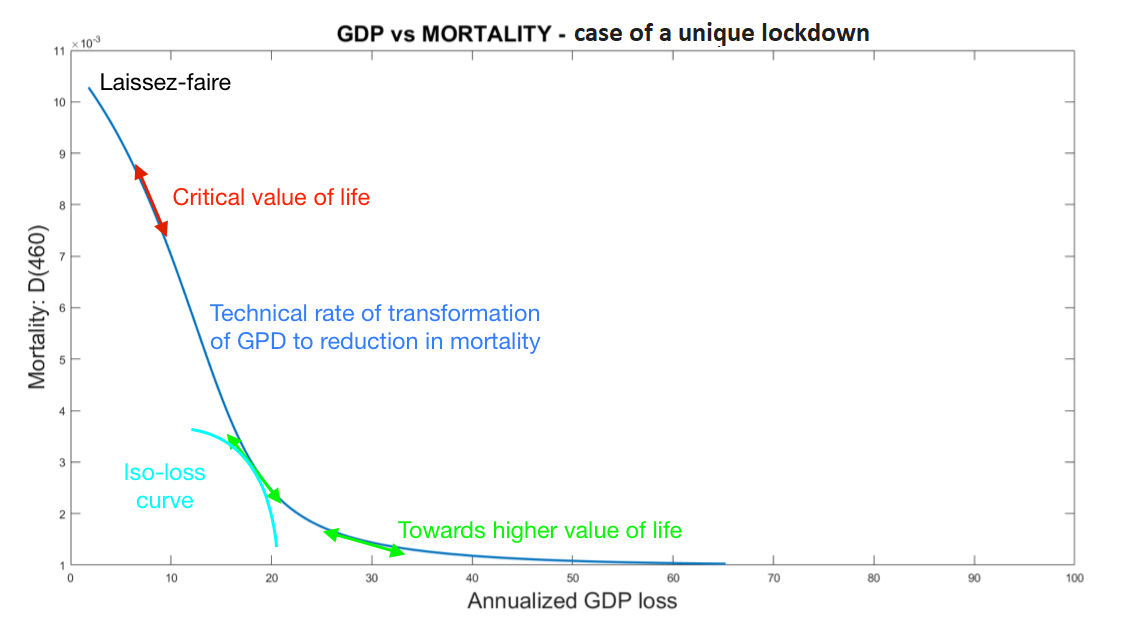}
    \caption{ Mortality  and production loss with one single, long lasting lockdown. The optimal choice,
    see Section \ref{SectUniqLock}, reduces mortality to $0.26\%$
    with a $19.45\%$ GDP loss: 
    the lockdown realizes a sharp containment of
mortality, but the constraint of protracted measures
causes a dramatic GDP loss.
This policy has not been followed
    by any country.}
    \label{DeathProductionLoss}
\end{figure}

  Figure \ref{onelock} also suggests that, for some values  of the preferred MRS, there could be two tangency points determining a  transition of phases, and possible non-uniqueness of solutions.
In Section \ref{uniqueness}, we show one simple example of this phenomenon, and
argue that it is the effect of a transition of phases in 
which the optimal control passes from being the
absence of any containment, to a more substantial 
tightening
as function of the social cost of COVID-19 deaths $a$.
As a result, at a critical value of $a$
there is coexistence of two optimal controls 
generating the same  value of the loss functional.
In addition, the multiplicity of suboptimal
controls around the critical value of $a$,
can have relevant social consequences
in terms on how to evaluate potential alternatives
to a given containment policy.
We then argue that, in general, the optimal control is likely to be unique provided that the 
social cost of COVID-19
mortality is large enough.
\color{black}

We finally consider some examples. Parameters are realistically
taken from current observations, and validated
by reproducing observed jump in mortality to this day,
and 
GDP reductions due to first lockdown periods. 
In the examples, we restrict, as mentioned, to
 simple controls in low dimensional spaces: 
 in the first example, a unique  lockdown is imposed
 at Day $85$
 till the end of the observed period, 
 a possibility that only few countries (such as Sweden, for instance)
 seem to have considered; with our choice of time frame,
 Day $85$ corresponds to March $25$: as detailed
 in Section \ref{calibration}, this is about when most lockdowns started; 
 in the second example, a partial reopening is realized 
 at Day $120$, after a
drastic initial lockdown has been imposed between Day 
$85$ and Day $120$, a typical situation at the current moment in time
in many countries;
in the third example, 
 a periodic alternation of lockdowns and reopenings
 is applied. In the last example, the optimal control 
 leads to herd immunity, which  is achieved 
 in such a way as to have
  very few infected at the time in which
 the immunity is reached;  the optimization has
 automatically determined the best possible
 access to herd immunity
  \cite{BM}. All examples are explicitly simulated and 
 optimal controls are numerically determined. 
 We then carry out a sensitivity analysis
 to evaluate the sensitivity of the results to 
 errors and fluctuations in parameter selection.
 
\color{black}
\section{Previous works and limitations}
\subsection{Brief literature review} \label{LitRev}

The number of papers adapting the SIR model to various economic contexts is large and rapidly evolving and it is impossible to make justice to the literature.

\cite{jones2020optimal} derive an optimal strategy where the social planner can affect both the contacts from consumption and contacts from production, each of them contributing to a third on the diffusion parameter $\beta$. They study the optimal policy using a standard growth model with leisure-consumption trade-offs. Agents react too little to the epidemics because they do not contemplate the impact of their behavior on other agents' infection rate and a lockdown seriously reduce infection and fatalities in flattening the curve, and avoid congestion of ICU units that would increase the fatality rate.

\cite{eichenbaum2020macroeconomics} study a standard DSGE model with a SIR contagion. They find that the epidemic causes \textit{per se} a moderate recession, with aggregate consumption falling down by 0.7\% within the year. Optimal containment would lead to a more drastic loss in consumption by 22\%. They also discuss the model with various health policies including vaccines, preparedness and other dimensions.

\cite{acemoglu2020multi} develop a multi-SIR model with infection, hospitalization and fatality rate depend on age, with three classes of individuals (young, middle-aged and old). They find that targeted containment policies are most efficient. For the same loss in GDP (-24\%), the targeted policies reduce mortality by 0.7 to 1.8 percentage points. They also include a stochastic vaccine arrival, not known for sure by the policy maker, and the stochastic process evolves over time. They assume as in \cite{alvarez2020simple} that full lockdown is not feasible, as we also assume. In \cite{alvarez2020simple} have a SIR model embedded in the growth model. Their optimal policy is to implement a severe lockdown 2 weeks after 1\% of the population is infected, to cover 60\% of the population, and then gradually reduce the intensity of the lockdown to 20\% of the population after 3 months. The absence of testing reduces the welfare. With testing and under the optimal policy, the welfare loss is equivalent to 2\% of GDP. Another paper on sequential lockdown with heterogeneous population is \cite{rampini2020sequential}. In particular, he uses a fatality rate of 0.06-0.08\% for younger agents and 2.67 to 3.65\% for older workers.

\cite{hall2020trading} study a variant with the minimization of an objective function and Hamilton-Jacobi-Bellman. Basing their fatality rate on 0.8\%from the Imperial college study, they argue that the optimal decline in consumption is approximately 1/3rd for one year. They then consider more recent estimates of the fatality rate, around 0.3\% across age groups, and argue that the optimal decline in consumption is still around 18\%. Our numbers are in line with these numbers.      

\cite{gollier_slides}, similar to us
assumes that a vaccine is ready after a few months (52 weeks in his case). He uses a $R_0$ around 2 (1.85 on the slides available on line) prior to containment, and the containment policy drives it down to 1, as we do. He uses a value of statistical life of 1 million euros and studies confinement scenarii under notably periodic reinfection rates. In \cite{gollier_paper_2, gollier2020cost}, he further explore the ethics of herd immunity and elaborate on lockdowns differentiated by age groups. In particular, he uses (Table 4 of \cite{gollier_paper_2}) a valuation of statistical lives depending on age, with the population between 60 and 69 representing 37\% of that of individuals below 19, the population between 70 and 79 representing 23\% and those above 80\% being slightly less than 10\% of that maximum value. He further discusses the critical moral hazard issues associated with the epidemic.

Economic consequences associated with demand and transmission mechanisms have been studied in \cite{guerrieri2020macroeconomic}: they show that in the presence of multi-sector production, with or without imperfect insurance, it is possible and plausible to have demand shocks in the second round going beyond the initial supply (shutdown shock). They study various aspects such as. labor hoarding and bankruptcy cascades. \cite{gregory2020pandemic} study the response of the economy in a search framework. The existence of search frictions slows down the recovery, and under reasonable parameter values, the initial lockdown strategy is likely to have long-lasting effects. In their baseline scenario, unemployment increases by 12 percentage points of the labor force for a year, and it takes 4 years to get back to 3 percentage points above the starting point before the lockdown. They find, interestingly, that it is better to have a longer initial lockdown (6 months) and no uncertainty that a shorter lockdown with the risk during 9 to 12 months to face a second lockdown. \cite{farboodi2020internal} estimate a SIR model in which the decline in activity comes from the optimal response of agents without intervention, and where immediate distancing in a discontinuous way, until a treatment is found, is a superior policy, to contain the reproduction number. In contrast, \cite{krueger2020taojun} calibrate a model similar to \cite{eichenbaum2020macroeconomics} in introducing goods that can be consumed at home rather than in public places and show that a Swedish-type policy of no-lockdown but strong behavioral response by agents reduces the socio-economic costs of Covid by up to 80\%.  

Last and most related to us, \cite{garibaldi2020modelling} analyze the existence of a SIR-matching decentralized equilibrium and analyze the inefficiencies stemming from matching externalities to determine the optimal way to reach herd immunity.

To conclude, in most of the papers cited above, there is an explicit focus on the optimal policy and the difference between the laissez-faire and the optimal policy is important, due to the externality of contagion. What our paper adds is a formal treatment of existence and a discussion of the potential multiplicity of solution and phase transition due to the non-linearity in the transmission mechanisms of the epidemic. Another paper in this spirit by \cite{lukaszrachel} finds explicit optimal solutions in a set of constrained policy functions and characterizes in particular the optimal starting date of the lockdown and discusses time-consistency issues.

\subsection{Limitations}

Our results are only a first indication of a modeling methodology for the search of an optimal trade off between containment of fatalities and reduced loss in welfare. While the parameters of the SEAIR model are related to the current outbreak, a more detailed model needs to consider stratified and geographically dispersed populations, and more elaborate lockdown policies, targeted to regions, industries and population that are more at risk. The following points are in order. 
\begin{enumerate}

\item As discussed above,  several papers have recently addressed similar questions, with in particular a focus on the optimal lockdown policy in the presence of behavioral response of agents on production, on investment or in consumption, of heterogeneity of the population and  on learning on the underlying parameters of the economy. 
Here, as usual in most current literature on COVID-19, 
we use an extension of the SIR model, hence assuming 
that each individual has the same chance of meeting
every other individual in the population.\footnote{This is 
a very limiting assumption, and can be well approximated only by small communities.  However, this assumption can also be seen as the equivalent of macroeconomic model with a representative agent. The parameters reflecting the aggregate behavior are not necessarily the parameters of the underlying individual agents, but are adjusted to fit the aggregate data in the best. This is a very similar discussion to that in \cite{keane2012micro} regarding labor supply elasticities.\color{black}} More realistically,
one would need to consider
geographically dispersed populations with long range 
interactions and communities (in the spirit of \cite{gandolfi2016} for instance).

\item In this paper, in order to have an accurate model of the dynamics of the pandemic with  several classes (susceptible, exposed, asymptomatic, symptomatic, recovered, fatality, natural demographic turnover), and yet be able to prove  existence and discuss conditions for uniqueness of an optimal response function, we treat the simpler case where the social planner can directly control the contagion parameter with an instrument that also affects GDP, either influencing the behavior of agents or closing markets.

\item The simulations we provide are based on parameters known at the time of this study, which are also the parameters perceived by policy makers at the time of decision making. With these parameters, we find that the statistical value of a human life 
that lead to the application of observed levels of 
lock down is in line with the value employed in 
actuarial sciences.

\item Given the nature of the virus and its novelty, there is some uncertainty surrounding the parameters, and these are likely to evolve as medical and epidemiological  research progresses. The final numbers will only be available gradually, with large testings currently being implemented. Our approach will therefore only 
allow us to reassess current policies retrospectively, in one way or another,  when the uncertainty at the time of decisions will have dissipated. 

\item Similarly, the parameters connecting the  spread of the diffusion of the virus to the loss of GDP from lockdown are uncertain. We choose a median way in the numbers in our simulations. 

\item We remain agnostic in our conclusions and provide sensitivity analysis in describing a range of alternative parameters. The shape of the optimal response in time is relatively invariant to those parameters, but warn that the intensity of the optimal lockdown relies a lot on exact numbers chosen in our simulations. 

\item On the economic side, one dimension not analyzed yet is the fact that the loss of GDP - a supply shock here - is likely to produce second round demand effects, leading to a persistence in the recession that our model does not take into account. Another limitation, of a similar spirit, is the ability of the lockdown to be reversible in the short-run, that is, once stopped, assembly lines may need a lag to resume.

\item Another limitation in the benchmark exercise is that the fatality rates vary enormously by age and morbidity, and in particular, the fatality rate is 10 times higher at least between the population below 60 and above 60. Since the lockdown mostly acts through adjustment of the labor force in our model, more analysis is needed to draw consequences about the overall lockdown strategy. We cannot deliver conclusions about the opportunity of the observed lockdown.  

\item Another limitation is that our model does not focus on the behavioral response of agents who may have learned about the parameters of the diffusion of the epidemics and reduced the 
infectivity of the virus independently of the lockdown. We do however believe that there are behavioral responses, but as in \cite{jones2020optimal}, we also believe that there are strong externalities in the contagion process that the purely-selfish individual behavior would not internalize. In that sense, the non-behavioral approach we follow is a proxy for the inefficiency of the decentralized equilibrium approach that leads to excessive contamination of the population. Future work should however relax the lack of behavioral response and investigate the size and sign of the interaction between government regulations and individual responses.

\item Last but not least, contrary to other studies, we limit our welfare analysis to a fixed period if time of the pandemic, one year and one quarter in the simulations. The implicit assumption is that after one year, treatments will have improved and vaccines may be possible. This acts as an extreme capitalization effect: in the future, technology will have improved and this is already integrated in economic calculation of the present time. It is easy to do  a sensitivity analysis where the length of time periods  is augmented, and investigate whether a new cycle of pandemic and lockdown is needed. The solution we exhibit for the optimal lockdown are therefore useful not only to rationalize the current experience, but also to prepare to the next wave or the next virus. We however introduce this assumption of a fixed and short period of time over which the smoothing occurs because the hope of a vaccine was present in public discussion.\footnote{As an example, the BBC reported on May 19, 2020 that the US company Moderna had been successful in training the immune system in human. The announcement lead to a 30\% increase int he value of this company in the stock markets. See
https://www.bbc.com/news/health-52677203}
\end{enumerate}

In Appendix, we present  optimal control problems that would address some of these limitations.

\section{A simple SEAIRD model with containment}
\subsection{Epidemic model } \label{epidemic model}

We consider SEAIRD, a  version of the SIR model \\(\cite{CHBC-C} (25) Page 20),
with some realistic features taken from current observations
of the Covid-19 outbreak.
The population is divided into: susceptible (S), exposed (E),
asymptomatic (A),
infected (I), recovered (R), Covid related deceased (D),
and natural deaths (ND).
Variables are normalized so that $S+E+A+I+R+D=1$. 
Overall,  we consider  a natural death rate $n$. This is compensated
by a natural birth rate, that  can be considered as
the rate of inclusion into the labor force; 
the natural birth rate is reduced by a factor that
can be interpreted as a Covid related slowdown.

We assume that affected individuals become first exposed (E),
a phase  in which they have contracted the virus and are contagious, without showing symptoms. Exposed individuals either develop 
symptoms at a constant
rate $\epsilon \kappa$, becoming infected, or progress into being asymptomatic 
till healing with rate $(1-\epsilon ) \kappa$.
A susceptible 
individual is assumed to have a uniform probability of encountering
every exposed and asymptomatic, and has a  probability of 
coming in contact with an infected severely 
reduced by a factor $s<1$. The parameter $s$ can be thought of as measuring the effect of an isolation policy that has \textit{per se} no direct effect on the labor force able to participate in economic production. Instead, the probability of all
encounters  is then affected by the mitigation policies via a factor $c(t)$, that will affect economic activity, as discussed in the next section.
Upon encounter, there is a
rate $\beta$ of transmission.

Those who are infected  recover at  rate $\gamma$, 
or do not recover and die at rate $\delta$; 
$\delta/\gamma$ is the   deaths to recovered ratio 
to be estimated from 
current available observations.
Asymptomatic recover at rate $\gamma$.

\begin{align} 
\text{Susceptible:\quad}\frac{dS}{dt} & =-\beta S c(t)(sI+E+A)-nS +n(1-D)\label{eq:dS}\\
\text{Exposed:\quad}\frac{dE}{dt} & =\beta c(t) S(sI+E+A)-(\kappa+n)E \label{eq:dE}\\
\text{Asymptomatic:\quad}\frac{dA}{dt} & = (1-\epsilon)\kappa E
-(\gamma+ n) A \label{eq:dA}\\
\text{Infected:\quad}\frac{dI}{dt} & = \epsilon\kappa E-(\gamma+
\delta+n)I\label{eq:dI}\\
\text{Recovered:\quad}\frac{dR}{dt} & =\gamma (A+I){-n R}\label{eq:dR}\\
\text{Covid deceased:\quad}\frac{dD}{dt} & =\delta I\label{eq:dD}\\
\text{Natural deaths:\quad}\frac{dD_N}{dt} & = n(S+E+A+I+R)\label{eq:dDN}
\end{align}
The initial population at the onset of the outbreak
of a previously unknown virus
consists primarily of susceptible, $S(0) \approx 1$, and a small fraction of exposed, so that $S(0)+E(0)=1$. For the model under consideration the  reproduction number  has the following expression 
\begin{eqnarray} \label{R0}
    \mathcal{R}(t)&=&\beta S(t) c(t)\left(\frac{1}{\kappa+n}+\frac{\kappa}{\kappa+n}\frac{(1-\epsilon)}{\gamma+n}+\frac{\kappa}{\kappa+n}\frac{s \epsilon}{\gamma+\delta+n}\right)\nonumber  \\
    &=&c(t) S(t)\frac{\beta  \kappa}{\kappa+n}\left(\frac{1}{\kappa}+\frac{(1-\epsilon)}{\gamma+n}+\frac{s \epsilon}{\gamma+\delta+n}\right) 
\end{eqnarray}
with basic reproduction number
$\mathcal{R}_0 = \mathcal{R}(0)$.
Notice that the population $S+E+A+I+R+D$ is preserved. This is a consequence of
the fact that by including the term $n$ demography replaces
all deaths except Covid deaths.\footnote{Mathematically, this is
easily seen 
by taking the derivative of $S+E+A+I+R+D$. In fact, letting
$\phi=S+E+A+I+R+D$, we have that $\phi(0)=1$  and 
$\frac{d\phi}{dt}=\frac{d(\phi-1)}{dt}
=-n(\phi-1)$, so that, since $(\phi-1)(0)=0$, necessarily $\phi \equiv 1$ by uniqueness of solutions of
differential equations.}

\subsection{Containment policies }
Containment policies are aimed at reducing the spread of the 
epidemic by reducing the chances of contacts among individuals. This is reflected in the model by a coefficient $c(t)$ that modulates
the encounters between susceptible and either exposed,
infected or asymptomatic
individuals. We assume that the reduction
is the same for all groups, as we have already included the
effect of symptoms in segregating infected individuals.
This justifies the factor $c(t)$ in \eqref{eq:dS}.

The opening level function $c(t) $ takes values in $[c_0,1]$
$c_0>0$;
$c(t)=1$ indicates that there is full 
opening, and no lockdown measures have been taken,
this is, by default, the status at the early stages of the outbreak. 
The lower bound $c_0$ corresponds to the infeasibility of
a complete shutdown;  this features the fact that there will always be a minimum amount of productive activity (e.g. 
via internet for home production) from private agents that cannot be interrupted.
Provided $c_0$ is small enough, all
our results are insensitive to the precise value. Further, to model concrete feasibility of the policy, the control is assumed to be a continuous, piece-wise linear function, with the
additional constraints of being constant for 
long enough time intervals $\overline \delta$; the transitions between the 
various  constant levels are taken to be linear
and last at least some $\underline \delta$ to model
non-negligible
friction in policies implementation; the controls are
then Lipschitz\footnote{A function $f$ is Lipschitz continuous with Lipschitz constant $M$ on an interval $[a,b]$ if there exists a constant $M$ such that $\frac{|f(x)-f(y)|}{|x-y|}\leq M$ for any $x,y\in [a,b], x\neq y$.} continuous.
The detailed form of $c(t)$ is given in Section \ref{4.1};
and several examples are presented in the Section \ref{Examples}. 

The class of containment policies considered in this work is 
in sharp contrast with other choices, such as \cite{GKK}, in which
all continuous functions are considered as possible controls.
Our work is in the spirit of other applied papers
\cite{RA}, focused on more realizable controls.

\section{Economic effects of epidemic and lockdown}

\subsection{Social planner's objective}

We investigate optimal containment policies balancing the 
effect of overall death vs. loss of production. 
This includes an a-priori evaluation of the social cost
of Covid deaths, embodied in a constant $a$.
The social planner's loss functional (the negative of its utility) $\mathcal{W}$ combines production $P$
and the number of new deaths from Covid \footnote{An interesting question is whether the social planner should also consider the 
change  in natural deaths due to
a decreasing population, a reduction of
traffic accidents and an increased 
risk for untreated pathologies caused by the
lockdown and the outbreak itself. We do not address this important question here.}  $D'(t)$, as follows: $\mathcal{W}=-\frac{P^{1-\sigma}}{1-\sigma}-aD'(t)$.
The social planner minimizes a loss function between an initial period
$t_0=0$ and a final period $t_{1}=T$ which could be infinity:

\[
\mathcal{L}=\left\{ \intop_0^Te^{-rt}\left[\mathcal{V}(P(t))+aD'(t)\right]dt\right\} 
\]
where  $\mathcal{V}(P(t))$
is a decreasing convex function of the GDP $P(t)$, and $a$ is the cost
of a covid death $D(t)$ for the social planner. 
The social planner discounts the future at rate $r$; 
such discount factor incorporates both the lesser interest for more distant economic consequences and the preference for containing immediate deaths,
hence it acts in the direction of
flattening the infection curve. Further normalizing the full-capacity GDP to 1, and 
assuming that the loss function is zero 
for full capacity, a typical function
would be:
\[
\mathcal{V}(P)=-\frac{P^{1-\sigma}-1}{1-\sigma}
\]
with $\sigma>0, \sigma\neq 1$, and $\mathcal{V}(P)=-\log(P)$
if $\sigma=1$. For values of $\sigma$ above 1 (our choice hereafter will be 2),  
\[
\lim_{P\rightarrow0}\mathcal{V}(P)=-\infty;
\]
 it follows that  $c=0$ is  never reached,
 and this further justifies the assumption of 
 $c\geq c_0$.
We have 
\begin{align*}
\mathcal{V}'(P) & =-P^{-\sigma} \\
\mathcal{V}"(P) & =\sigma P^{-\sigma-1}>0
\end{align*}

As a last remark, with a linear loss function $\sigma=0$, the parameter $a$ can directly be interpreted as the value of life in elasticity with respect to GDP. With higher values of $\sigma$, the value of $a$ relates to the value of life in marginal utility of GDP, given the aversion to intertemporal fluctuations in GDP that is characterized by  the elasticity of intertemporal substitution introduced in the next section.

\subsection{Production and welfare }

We take the overall production $P$ to be a linear function
of labor. At any given time, the  labor force  is $S+E+A+R$,
but its effective availability for production is
determined by the current opening policy  $c(t)$.
The link between  $c(t)$ and GDP is captured by
a function
\[
\mathcal{G}(c(t))
\]
and it affects GDP as:
\begin{align} \label{production}
P(t) & =\mathcal{G}(c(t))L(t)\\
 & =\mathcal{G}(c(t))\left[S+E+A+R\right].
\end{align}

 Labor availability
in the presence of a lock down is not assumed to be linear, as
the effects of socio-economic restrictions can be contained by 
work force substitution or increased productivity. We assume an iso-elastic control
\begin{align}
    \mathcal{G}(c(t))=c(t)^{\theta}
\end{align}
with $\theta\in (0,1)$ for reasons discussed in the parameter selection  section \ref{calibration}. We think of $\theta$ as a reduced form parameter that connects the infection spread and the change in GDP.

With these assumptions, the loss function becomes


\begin{align} \label{loss}
\mathcal{L}=\left\{ \intop_0^{T}e^{-rt}\left[
-\frac{(c(t)^{\theta} \left[S+E+A+R\right])^{1-\sigma}-1}{1-\sigma}
+aD'(t)\right]dt\right\}. 
\end{align}

\section{Mathematical results}
\subsection{Existence of a global minimum of the loss functional}
\label{4.1}
In this section we  prove the existence of a global minimum over a suitable class of control functions $c$.
More precisely, fix two values $\overline \delta,
\underline \delta $ with $\overline \delta > 2
\underline \delta>0 $, and let 
$\mathcal{K}$ be the collection of continuous
functions 
$$c:[0,T]\rightarrow [c_0,1],
$$ 
such that there exist
$ \alpha_1< \dots < \alpha_{k-1}\in [0,T]$ and 
$c_0\leq \beta_1, \dots, \beta_k\leq 1$, with
$\alpha_{i+1}-\alpha_i \geq \overline \delta$
for all $i=1,\dots, k-1,$
such that $ c(t)$  is continuous and 
\begin{eqnarray}\label{piecelin}
c(t)=\begin{cases}
\beta_1 \quad \text{ if } t \in [0,\alpha_1]\\
\beta_i \quad \text{ if } t \in [\alpha_{i-1}+\underline \delta,\alpha_i], \quad
i=1, \dots, k\\
 \beta_i+(\beta_{i+1}-\beta_i)(t-\alpha_i)/\underline \delta\quad \text{ if } t \in [\alpha_{i},\alpha_i+\underline \delta],\\

\end{cases}
\end{eqnarray}
where we have taken $\alpha_0=0, \alpha_k=T$.
\begin{figure}
    \centering
    \includegraphics[scale=0.3]{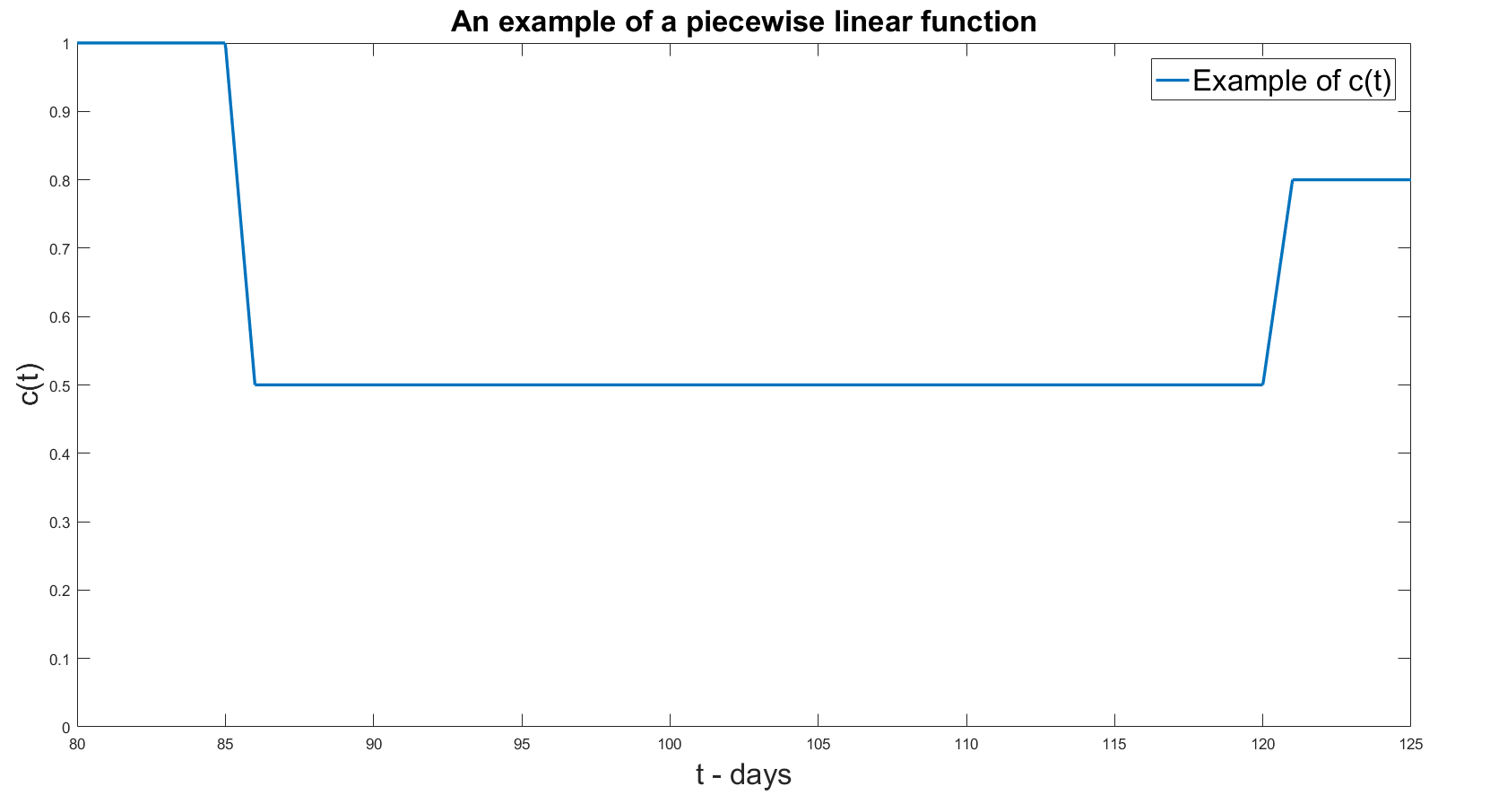}
    \caption{An example of the control variables $c$ used as allowed opening levels. Notice that we always consider controls of this type,
although in the figures presented in 
later sections the linear, non costant, portions might not be
easily detectable. }
    \label{fig:piecewise_linear}
\end{figure}
Notice that $\mathcal{K}$ is a class of Lipschitz continuous functions with Lipschitz constant bounded uniformly by $(1-c_0)/\underline \delta$ on $[0,T]$, as examplified in Figure \ref{fig:piecewise_linear}.
\begin{teo}
$\mathcal{K}$ is relatively compact in the space of continuous functions $C[0,T]$.
\end{teo}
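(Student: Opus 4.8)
The plan is to invoke the Arzelà–Ascoli theorem, which characterizes the relatively compact subsets of $C[0,T]$ as precisely those that are uniformly bounded and equicontinuous. Both properties are already implicit in the discussion preceding the statement, so the argument reduces to recording them carefully.

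First, uniform boundedness is immediate: every $c\in\mathcal{K}$ satisfies $c_0\le c(t)\le 1$ for all $t\in[0,T]$, whence $\sup_{c\in\mathcal{K}}\|c\|_\infty\le 1$. Second, equicontinuity will follow from a uniform Lipschitz bound. Fix $c\in\mathcal{K}$ with breakpoints $\alpha_1<\dots<\alpha_{k-1}$ and levels $\beta_1,\dots,\beta_k$. On every constant piece the derivative vanishes, while on each transition interval $[\alpha_i,\alpha_i+\underline\delta]$ the derivative equals $(\beta_{i+1}-\beta_i)/\underline\delta$, whose modulus is at most $(1-c_0)/\underline\delta$ since $\beta_i,\beta_{i+1}\in[c_0,1]$. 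As $c$ is continuous and piecewise affine with all slopes bounded in absolute value by $M:=(1-c_0)/\underline\delta$, integrating the derivative over $[x,y]$ yields $|c(x)-c(y)|\le M|x-y|$ for all $x,y\in[0,T]$, with $M$ independent of $c$. A family sharing a common Lipschitz constant is uniformly equicontinuous (given $\varepsilon>0$, take $\eta=\varepsilon/M$), so Arzelà–Ascoli applies and $\mathcal{K}$ is relatively compact in $C[0,T]$.

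There is no real obstacle here: the only genuine content is the uniform Lipschitz estimate $M=(1-c_0)/\underline\delta$, and the rest is a direct citation of the compactness criterion. If desired, one can also observe that the spacing condition $\alpha_{i+1}-\alpha_i\ge\overline\delta$ forces $k\le T/\overline\delta+1$, so the number of pieces is uniformly bounded; this is not needed for relative compactness, but it is convenient if one later wishes to describe the closure of $\mathcal{K}$, since a uniform limit of functions valued in $[c_0,1]$ with Lipschitz constant $\le M$ again has these two properties, and the finitely many (suitably spaced) breakpoints and levels may be extracted along a convergent subsequence.
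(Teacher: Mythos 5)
Your proof is correct for the statement as written: uniform boundedness by $1$ and the uniform Lipschitz bound $(1-c_0)/\underline\delta$ give equicontinuity, and Arzel\`a--Ascoli then yields relative compactness of $\mathcal{K}$ in $C[0,T]$. This is exactly the opening step of the paper's proof. The difference is that the paper does not stop there: after extracting a uniform limit $c$ of a sequence in $\mathcal{K}$, it devotes the bulk of the argument to showing that $c$ is again of the piecewise-linear form \eqref{piecelin}, i.e.\ that $\mathcal{K}$ is closed and hence genuinely compact. That stronger conclusion is what is actually used right afterwards: the Weierstrass argument for existence of a minimizer needs the limit of a minimizing sequence to lie in $\mathcal{K}$ itself, and mere relative compactness only places it in the closure. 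Your closing remark points in the right direction: since $k\le T/\overline\delta+1$, one can pass to a subsequence along which the breakpoints $\alpha_i^{(n)}$ and levels $\beta_i^{(n)}$ converge, and the constraints $\alpha_{i+1}-\alpha_i\ge\overline\delta$ and $\beta_i\in[c_0,1]$ survive in the limit, so the limit function is again of the form \eqref{piecelin} (possibly with coincident levels, i.e.\ trivial transitions). Carrying out that extraction would in fact give a cleaner proof of closedness than the paper's pointwise case analysis; as it stands, your argument proves the stated theorem but not the compactness of $\mathcal{K}$ that the paper relies on downstream.
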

\begin{proof}
For each sequence $\{c_n\}, c_n \in \mathcal{K}$,  we have 
$c_n\leq 1$ and $|c_n(x)-c_n(y)|\leq (1-c_0)|x-y|$; by Ascoli-Arzela Theorem, the sequence converges uniformly in $[0,T]$, possibly up the a subsequence,   to a continuous function $c$. Clearly the function $c$ has range in $[c_0,1]$ and is Lipschitz continuous with Lipschitz constant bounded by $(1-c_0)/\underline \delta$. Let us prove that it must be piece-wise linear of the form (\ref{piecelin}).

Consider $\eta \leq \underline \delta$ and points of the form 
$x_k=k \eta$ for $k=1,\dots,\lfloor T/\eta\rfloor$.
Take $k_1$ and $k_2$ such that $|k_1-k_2|<\underline \delta/\eta$.
 We consider the two possible cases.
\begin{enumerate}
    \item Suppose  $c(x_{k_1})=c(x_{k_2})$. Since
    $|x_{k_1}-x_{k_2}| <\underline \delta$, any $c_n$ has the extreme values of the interval $[x_{k_1},x_{k_2}]$ exactly 
    at $x_{k_1}$ and $x_{k_2}$;  
     for small
    $\epsilon$ and large enough $n$, assuming, without loss of
    generality, that $c_n(x_{k_1}) \leq c_n(x_{k_2})$, we have 
    $$
    c(x_{k_1})-\epsilon \leq
    c_n(x_{k_1}) \leq c_n(x) \leq c_n(x_{k_2})
    \leq c(x_{k_2})+\epsilon=c(x_{k_1})+\epsilon
    $$
    for all $x\in [x_{k_1}, x_{k_2}]$. Hence, 
    in the limit for $n \to \infty$, we have $c(x)=c(x_{k_1})$
    for all $x\in [x_{k_1}, x_{k_2}]$.
    \item If $c(x_{k_1})\neq c(x_{k_2})$,
    then take $x_{k_2}+\underline \delta$ and $x_{k_2}+\underline \delta+(\overline \delta-\underline \delta)/2 $:
    it must be $c(x_{k_2}+\underline \delta)=c(x_{k_2}
    + \underline \delta+(\overline \delta-\underline \delta)/2)$;
    in fact, for $\epsilon < |c(x_{k_1})- c(x_{k_2})|/3$
    and $n$ large enough, $|c_n(x_{k_1})- c_n(x_{k_2})|
    >|c(x_{k_1})- c(x_{k_2})|/3 >0$, hence $c_n$ must have
    a non constant part in $[x_{k_1},x_{k_2}]$ 
    and must thus be constant in $[x_{k_2}+\underline \delta,x_{k_2}+\underline \delta+(\overline \delta-\underline \delta)/2]$.
    For the same reason,  $c(x_{k_1}-\underline \delta)=c(x_{k_1}-\underline \delta-(\overline \delta-\underline \delta)/2)$.
    Consider the sup $\overline x_1$ of the points $x \leq x_{k_2}$ such
    that $c(x)=c(x_{k_1}-\underline \delta)$, and the inf $\overline x_2$
    of the points $x\geq x_{k_1}$ such that $c(x)=c(x_{k_2}+\underline \delta)$. 
    For  $\epsilon$ and $n$ large enough, $c_n$ must
    be constant outside of $[\overline x_1-\epsilon,
    \overline x_2 + \epsilon]$, and linear in some interval of 
    length $\underline \delta$ included in $[\overline x_1-\epsilon,
    \overline x_2 + \epsilon]$, connecting two values
    at distance at most $\epsilon$ from $c(\overline x_1)$
    and $c(\overline x_2)$, respectively. 
    Since this holds for all small $\epsilon$,
    it implies that $|\overline x_1-\overline x_2|=\underline \delta$,
    and that $c$ must be linear in between these points, connecting 
    $c(\overline x_1)$
    and $c(\overline x_2)$ by continuity.
    
\end{enumerate}
Pairs of points in which $(2)$ happens cannot overlap,
hence indicate by $\alpha_1, \alpha_2, \dots, \alpha_{k-1}$
be  the smallest points of each pair, arranged in increasing
order; 
let $\alpha_0=0, \alpha_k=T$;
and let $c_0\leq \beta_1, \dots, \beta_k\leq 1$
be the such that $c(x)=\beta_i$ for $x \in [\alpha_{i-1}+
\underline \delta,
\alpha_i]$ for $i=1, \dots, k$. We have shown that
$c(x)$ satisfies \eqref{piecelin} for these values
of $\alpha$'s and $\beta$'s.
This finishes the proof.

\end{proof}

Consider now the following minimization problem 
$$
\min_{c\in\mathcal{K}}\mathcal{L}(c)
$$
with $\mathcal{L}$ as in \eqref{loss}.
We now show that the functional is continuous in $c$: once this is proved, by Weierstrass Theorem we conclude that a global minimum $c*\in\mathcal{K}$ of the functional $\mathcal{L}$ exists.
To prove continuity we use the well-posedness of the S-E-A-I-R-D  model.
In fact, let $\Vec{X}=(S,E,A,I,R,D)$  and denote by $\Vec{F}(c,\Vec{X})$ the vector-valued function having as components the right-hand sides of the S-E-A-I-R-D differential equations.

Then we can rewrite the system in vector form
$$
\Vec{X}'=\Vec{F}(c,\Vec{X}),\,\, \Vec{X}(0)=\Vec{X}^0.
$$
where by assumption the norm\footnote{Here $\|\Vec{X}\|=\max_{1\leq i\leq6}\left\{\max_{[0,T]}|X_i(t)|\right\}$} of the solution $\vec{X}$ is such that $\|\Vec{X}\|\leq 1$ and $\Vec{F}$ is smooth in both variables.
Let now $c_n\in \mathcal{K}$ such that $c_n$ converges uniformly in $[0,T]$ to a function $c\in \mathcal{K}$.
Consider now the solution $\Vec{X}_n\in C^1[0,T]$ of
$$
\Vec{X}'=\Vec{F}(c_n,\Vec{X}),\,\, \Vec{X}(0)=\Vec{X}^0
$$
and denote by $\Vec{X}\in C^1[0,T]$ the solution to 
$$
\Vec{X}'=\Vec{F}(c,\Vec{X}),\,\, \Vec{X}(0)=\Vec{X}^0
$$
Then $\Vec{W}_n=\Vec{X}_n-\Vec{X}$ is solution to 
$$
\Vec{W}'_n=\Vec{F}(c_n,\Vec{X}_n)-\Vec{F}(c,\Vec{X}),\,\,\,\Vec{W}(0)=\Vec{0}
$$
Now observe that 
$$
\Vec{F}(c_n,\Vec{X}_n)-\Vec{F}(c,\Vec{X})=\Vec{F}(c_n,\Vec{X}_n)-
\Vec{F}(c,\Vec{X}_n)+\Vec{F}(c,\Vec{X}_n)-\Vec{F}(c,\Vec{X})
$$
and by the smoothness of $\Vec{F}$, the boundness of $\Vec{X}_n$ and $\Vec{X}$ and the linear dependence of $\Vec{F}$ on $c$ we have the following bounds
$$
\|\Vec{F}(c_n,\Vec{X}_n)-\Vec{F}(c,\Vec{X}_n)\|\leq C\|c_n-c\|
$$
and 
$$
\|\Vec{F}(c,\Vec{X}_n)-\Vec{F}(c,\Vec{X})\|\leq K \|\Vec{W}_n\|
$$
From these last two inequalities we get the differential inequality 
$$
\|\Vec{W}'_n\|\leq K\|\Vec{W}_n\|+C\|c_n-c\|,\,\,\Vec{W}(0)=\Vec{0}
$$
which implies
$$
\max_{[0,T]}\|\Vec{W}_n\|\leq C\max_{[0,T]}|c_n-c|e^{KT}
$$
Hence, 
$$\max_{[0,T]}\|\Vec{W}_n\|\rightarrow 0$$ 
as $n\rightarrow \infty$ i.e. $$\max_{[0,T]}\|\Vec{X}_n-\Vec{X}\|\rightarrow 0$$ 
as $n\rightarrow \infty$.
Finally, noting that 
$$
\mathcal{L}(c_n)=\int_0^Tf(t,c_n,\Vec{X}_n)dt
$$
and since $f$  is continuous in all variables ($a D'=\delta X_4$),  $\max_{[0,T]}\|\Vec{X}_n-\Vec{X}\|\rightarrow 0$ and $\max_{[0,T]}|c_n-c|\rightarrow 0$  we finally obtain
$$
\mathcal{L}(c_n)\rightarrow \mathcal{L}(c)
$$ 
as $n\rightarrow \infty$.
\begin{rem} Clearly, existence of a minimum of the functional can be derived  in the more general class of controls that are uniformly Lipschitz continuous in $[0,T]$ with values in $[c_0,1]$ again by compactness and continuity of $\mathcal{L}$. 
\end{rem}

\subsection{The first order optimality conditions} \label{pontriagyn}

We now derive the first order optimality conditions in the form of Pontryagin minimum principle, \cite{P}, for  the constrained optimization problem 
\begin{equation}\label{Min}
\min_{c\in\mathcal{K}}\mathcal{L}(c)=\min_{c\in\mathcal{K}} \intop_0^{T}e^{-rt}\left[
\frac{1-\left(c(t)^{\theta} \left[S+E+A+R\right]\right)^{1-\sigma}}{1-\sigma}
+aD'(t)\right]dt 
\end{equation}
under the constraint 
\begin{equation}\label{const}
\vec{X}'=\vec{F}(c,\Vec{X}),\,\,\vec{X}(0)=0.
\end{equation}
where $\mathcal{K}$ is the class of controls defined in the previous section.
Let $\Vec{X}^*$ and $c^*\in \mathcal{K}$ be the optimal pair for the above constrained minimization problem. 

Then the augmented Hamiltonian is
$$
\mathcal{H}=e^{-rt}\left(\frac{1-\left(c^{\theta} \left[S+E+A+R\right]\right)^{1-\sigma}}{1-\sigma}+a\delta I+e^{rt}\vec{\lambda}\cdot\vec{F}+e^{rt}w_1(1-c)+e^{rt}w_2c\right)
$$
and considering now
$$
\mathcal{\tilde H}=e^{rt}\mathcal{H}=\frac{1-\left(c^{\theta} \left[S+E+A+R\right]\right)^{1-\sigma}}{1-\sigma}+a\delta I+e^{rt}\vec{\lambda}\cdot\vec{F}+e^{rt}w_1(1-c)+e^{rt}w_2c
$$
where $\vec{\lambda}=(\lambda_S,\lambda_E,\lambda_A,\lambda_I,\lambda_R,\lambda_D)$, $w_1$ and $w_2$ are two non-negative functions. Set $\vec{\mu}=e^{rt}\vec{\lambda}$ and $v_1=e^{rt}w_1$, $v_2=e^{rt}w_2$, then we can express the optimality conditions in terms of the Hamiltonian $\mathcal{\tilde H}$, i.e.,  

$$
\mathcal{\tilde H}_c^*=0
$$
where $\mathcal{\tilde H}_c^*$ indicates the  derivative with respect to $c$ of $\mathcal{\tilde H}(c,\Vec{X}^*,\vec{\mu}^*,v^*)$, 
i.e. 
$$
-\theta c^{\theta(1-\sigma)-1}(S^*+E^*+A^*+R^*)^{1-\sigma}-\mu^*_S \beta S^*(sI^*+E^*+A^*)+\mu^*_E \beta S^*(sI^*+E^*+A^*)-v^*_1+v^*_2=0
$$
where $v^*_1,v^*_2\geq 0$ and the vector $\vec{X}^*$ and $\vec{\mu^*}$ are respectively the solution of the direct problem and of the adjoint linear problem along the optimal solution $c=c_*(t)$, that is

\begin{align*}
  \left\{\begin{array}{rcl}
&\mu'_S-r\mu_S&=c_*^{(1-\sigma)\theta}(S^*+E^*+A^*+R^*)^{-\sigma}+\mu_S( n+\beta c_*(sI^*+E^*+A^*))
-\mu_E\beta c_*(sI^*+E^*+A^*)\\
&\mu'_E-r\mu_E&=c_*^{(1-\sigma)\theta}(S^*+E^*+A^*+R^*)^{-\sigma}+\mu_S\beta c^*S^*-\mu_E(\beta c^*S^*-(\kappa+n))-\mu_A(1-\epsilon)\kappa-\mu_I\kappa\epsilon\\
&\mu'_A-r\mu_A&=c_*^{(1-\sigma)\theta}(S^*+E^*+A^*+R^*)^{-\sigma}+\mu_S\beta c_*S^*-\mu_E\beta c_*S^*+\mu_A(\gamma+n)-\mu_R\gamma\\
&\mu'_I-r\mu_I&=-a\delta+\mu_S\beta s c_*S^*-\mu_E\beta s c_*S^*+\mu_I(\gamma+\delta+n)-\mu_R\gamma-\mu_D\delta\\
&\mu'_R-r\mu_R&=c_*^{(1-\sigma)\theta}(S^*+E^*+A^*+R^*)^{-\sigma}+\mu_Rn\\
&\mu'_D-r\mu_D&=\mu_Sn-\mu_D\delta,\\
&\vec{\mu}(T)&=0
\end{array}
\right.
\end{align*}

One can use the optimality conditions to compute the optimal control in a larger class of functions and use it as benchmark for the suboptimal control that we find in the class $\mathcal{K}$.

\begin{eqnarray*}
\end{eqnarray*}
%

\subsection{On uniqueness of the optimal control} \label{uniqueness}
The functional $\mathcal{L}$ in  \eqref{loss} is in general not
convex, and there are  no reasons to expect uniqueness of the
optimal control in $\mathcal{K}$. In fact, in some cases
the cost functional appears to undergo a phase transition
 in the social cost of COVID-19 death $a$.
 Typically, real valued functions of systems undergoing a phase transition 
 are  convex in one phase and 
 concave in the other (see, e.g.,  the percolation probability as
 function of its intensity parameter, \cite{G}, Figure 2.3),
 which is a further justification for the observed loss
 of convexity of $\mathcal{L}$.
 In addition, at the critical value of $a$  multiple optimal 
 controls can appear. 
 
 In the simple case of a unique, long term lockdown
 imposed at Day $85$ to an opening level $\overline c$, and by a suitable choice of the parameters within
 the realistic ranges described below 
  in Section \ref{calibration}, one
  can numerically find a value of $a$ for which there are
  two minimizers of $\mathcal{L}$. 
  
 A graph of $\mathcal{L}$ is plotted in Figure \ref{twominima} 
  as function of $\overline c$. 
  At the selected value of $a$, an optimal strategy is to exert no lockdown, 
  but another optimal solution is to impose
  an opening level $\overline c=84$. The two solutions have
  different overall mortality and GDP loss, but the same
  value of the loss functional, hence they are
  equivalent for the social planner, and for all
  those agreeing with her/his parameter selection and
  perceived social cost of a COVID-19 death.

 \vspace{0.4cm}
\begin{figure}[h!] \label{twominima}
    \centering
    \includegraphics[scale=0.35]{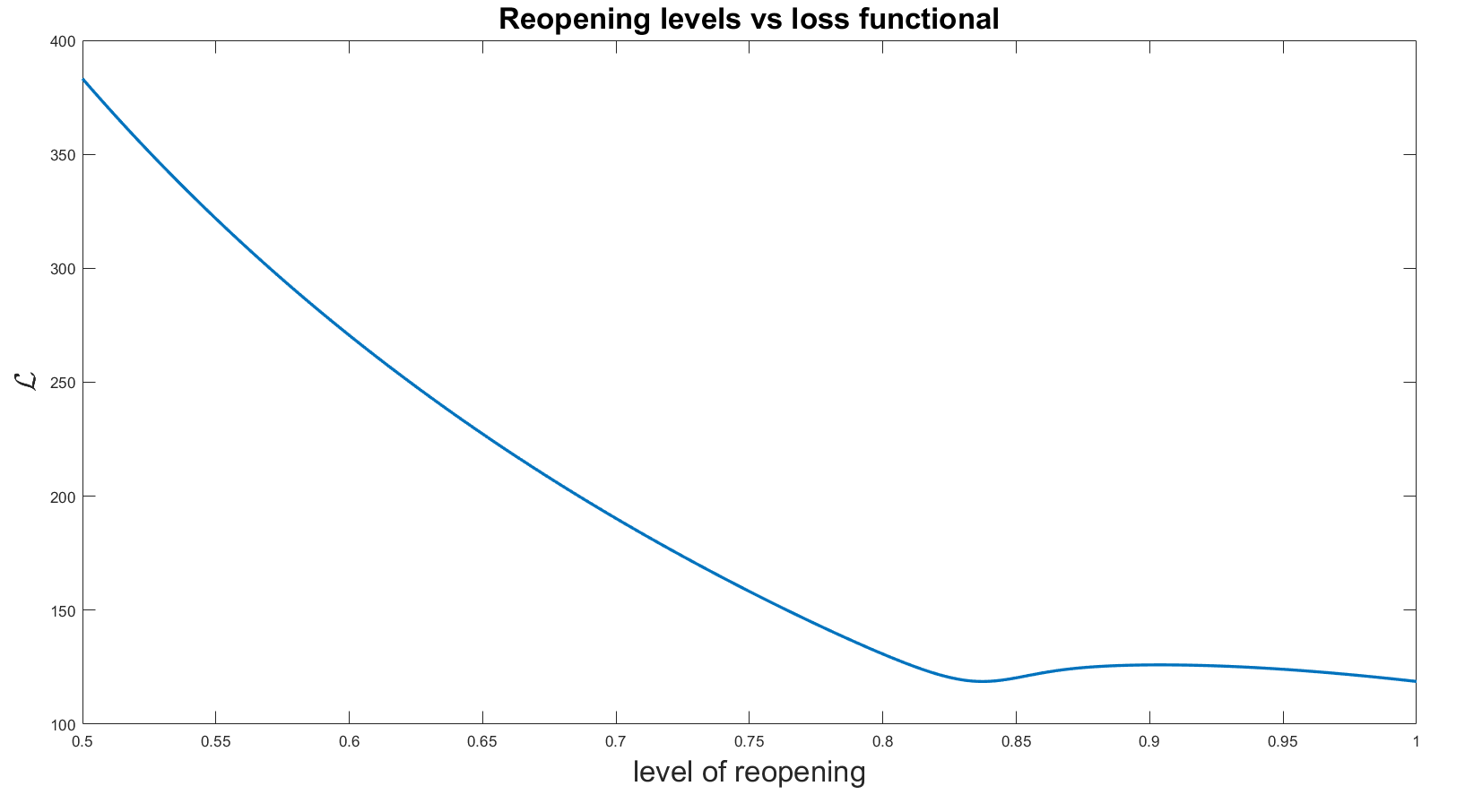}
    \caption{An example of two minima. 
    $\mathcal{L}$ as function of the reopening
    intensity $\overline c$ applied from day $85$ to $460$.
    The social cost of Covid death is fixed at $a\approx 7833,11$ and $r=0.00001$. See Section \ref{calibration} for the other parameters. }
    \label{two-min}
\end{figure}

\vspace{0.4cm}

At  values of $a$ which seem to better reflect current valuations, that is for a higher value of the social cost
of Covid deaths,
see Section \ref{calibration}, the minimum is likely to occur in the
phase in which $D(t)$ is also convex, which is at lower values of
$\overline c$, and therefore it
is unique. This is the case in all the examples of
the next section.

\section{Examples of optimal policies.} \label{Examples}
 \subsection{Parameter selection}\label{calibration}

There is a large variability in
the estimations of the COVID-19 infection rate $\beta$ \cite{AT};
we adopt the average value of
$\beta\approx 0.25$.
The reduced exposure to infected individuals who have 
developed symptoms is difficult to estimate: 
we start from a factor of $s=0.1$ and carry out a sensitivity analysis.
Duration of the latency period after infection
and before symptoms are developed has been 
estimated in about $5$ days (see for example \cite{QL} and \cite{K-W}), so that $\kappa \approx 0.2$.
The fraction of asymptomatic is also quite problematic,
with estimates ranging from $5\%$ to $60\%$; 
we take an average value of $(1-\epsilon)=1/3$, 
estimated in one of the studies, \cite{N}.
Similarly, the average recovery period is about
$7$ days, for mild cases
\cite{Byrne}, suggesting $\gamma \approx 0.14$
for the recovery rate of an asymptomatic;
in general, more severe cases worsen after
about $7$ days, requiring hospitalization, which completely excludes
them from the possibility of transmission: for this
reason, we also use the same value of $\gamma \approx 0.14$
for moving these cases from the infected to recovered,
where most of them eventually will be;
one fraction eventually dies, with the rate discussed now.
The death to recovery rate is a highly controversial 
value, as both the recorded number of infected
and deaths are affected by error which could range to 
$1000\%$. We take $\delta/\gamma \approx 0.02$ in such a
way that the overall mortality rate in the population
if the epidemics spreads without control ends up being
about $1\%$; this is in line with several studies and observations:
\cite{AB} estimates a US mortality of $1.3\%$;
 the Institute Pasteur indicates  $0.53\%$ \cite{salje2020estimating};
   and several locations have observed an increase of
   overall mortality up to six-fold \cite{ISTAT}; this is compatible
with a COVID-19 death rate of about $1\%$ 
spread over the two months very likely needed for the uncontrolled 
virus to infect everyone in a limited area. Finally, the natural mortality rate is taken to be $3 \times 10^{-5}$  corresponding to about 12
death per year per $1000$, 
which is an average natural mortality  rate in industrialized countries.\footnote{https://data.worldbank.org/indicator/SP.DYN.CDRT.IN\color{black}}

With these assumptions, the equations become
\begin{align} 
\frac{dS}{dt} & =-0.25 \hskip 1mm S  \hskip 1mm
c(t)(0.1 \hskip 1mm I+E+A)-0.00003 S + 0.00003(1-D)\label{eq:dS1}\\
\frac{dE}{dt} & =0.25 S(0.1 I+E+A)-(0.2+0.00003)E\label{eq:dE1}\\
\frac{dA}{dt} & = 0.2/3 E -(0.14+
0.00003) A \label{eq:dA}\\
\frac{dI}{dt} & =0.4/3 E-(0.14+
0.00283)I\label{eq:dI1}\\
\frac{dR}{dt} & =0.14 (A+I){-0.00003 R}\label{eq:dR1}\\
\frac{dD}{dt} & =0.0028 I\label{eq:dD1}\\
\frac{dD_N}{dt} & = 0.00003(S+E+A+I+R)\label{eq:dDN1}
\end{align}
As we take as initial time a very early stage of the epidemic
outbreak (for all countries except China), we assume that the number of initial exposed is very small, in the order of
one in a million; hence we take $S(0)=1-10^{-6}, E(0)=10^{-6},  A(0)=I(0)=R(0)=D(0)=0$.
A more accurate model, taking care of the geographical dispersion
of the population would include different contact rates
for individual living in far away areas \cite{MarinoGattoetal}

As a verification of parameter selection, we show
that the mortality reproduces current observations,
see Figure \ref{total_deaths1}. 
Figure \ref{total_deaths2} illustrates the risk of a
restart of the outbreak after the first reopening.

\begin{figure}[h!]
    \centering
    \includegraphics[scale=0.3]{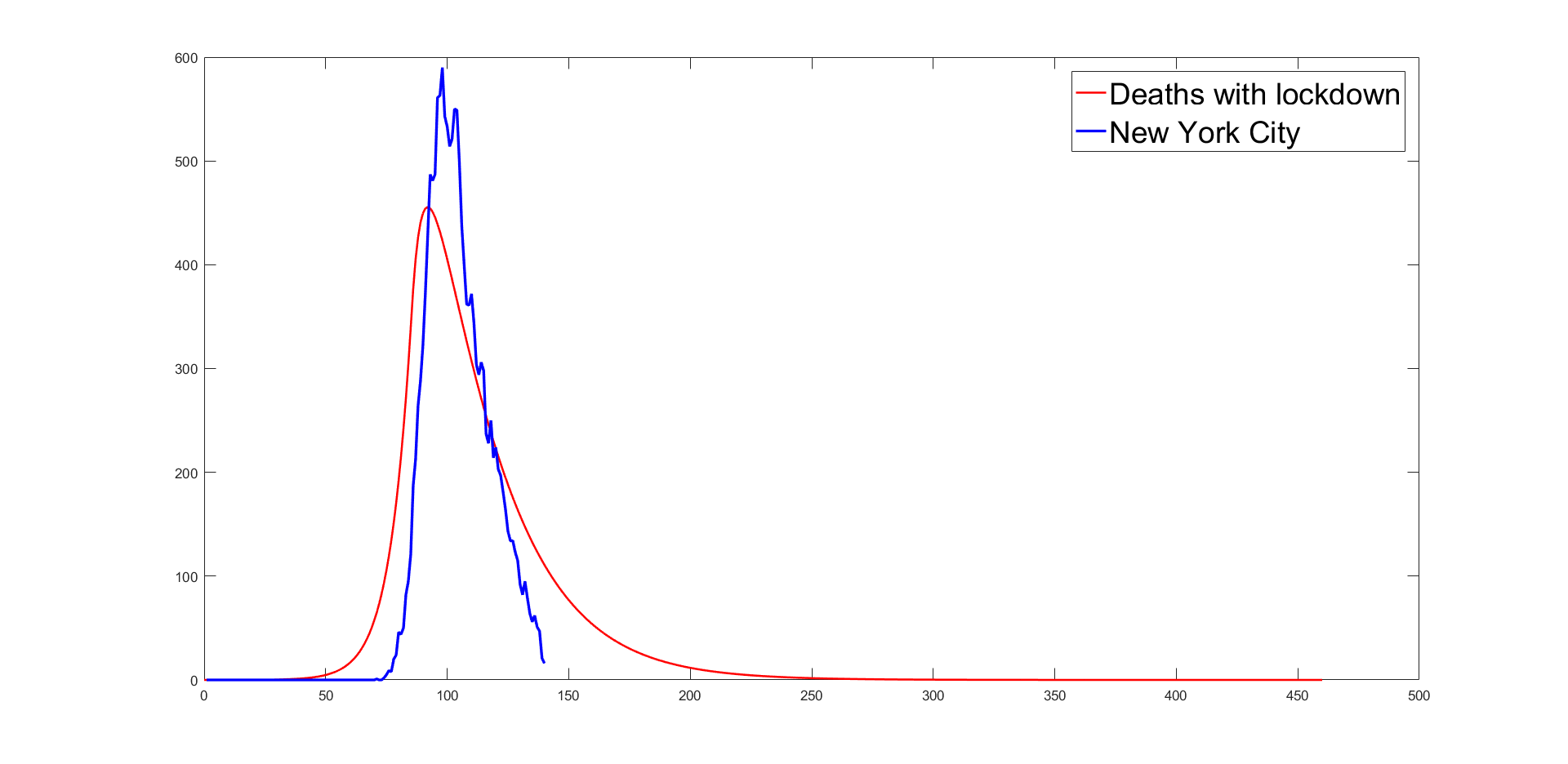}
    \caption{ Covid deaths for an outbreak followed by lockdown at Day 85.
    Comparison is
    with real data of NYC. Notice that NYC seems to have 
    a slightly higher transmission rate $\beta$,
    and has imposed a stricter containment policy than
    the one assumed by the graph of the mortality 
    in our model. \footnotesize{Note: the model is calibrated to fit cities or equivalent homogeneous areas and does not represent an entire country.}}
    \label{total_deaths1}
\end{figure}

\begin{figure}[h!]
    \centering
    \includegraphics[scale=0.3]{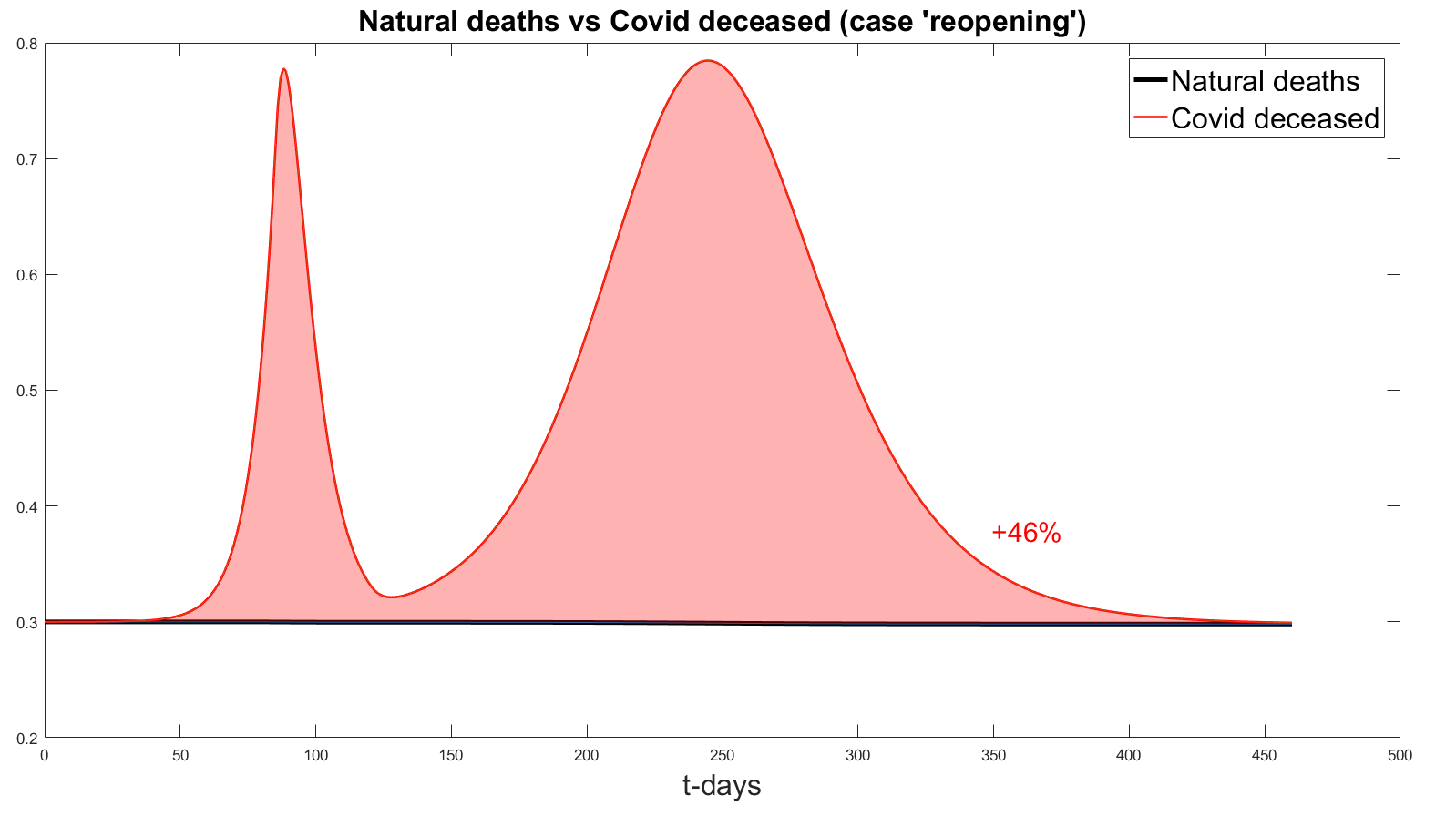}
    \caption{Total deaths for an outbreak, followed by lockdown at Day 85, and reopening at Day 120.
    The percentage represents the increase of deaths with respect to the natural ones. \footnotesize{Note: the model is calibrated to fit cities or equivalent homogeneous areas and does not represent an entire country.}}
    \label{total_deaths2}
\end{figure}


The yearly discount rate $r$ in various developed countries \color{black}  is currently in the range $-0.75$ to $5.5\%$; we assume a discount rate of $4\%$ but we check the impact of a wide range of alternative assumptions in the sensitivity analysis.
The exponent $\sigma$ of the function $\mathcal{V}$
is taken to be $\sigma=2$, leading to an intertemporal elasticity of substitution within the year of 1/2.

The elasticity parameter $\theta$ needs to be considered carefully.
To estimate it, we recall that the reproduction number
 \eqref{R0} has been estimated in various countries 
 before and after a lockdown, see Table \ref{tab:ThetaEstimates}. From \eqref{production},
 at each point in time
\[
\log P=\theta\left(\log c(t)\right)+\log(S+E+A+R)
\]
so that, considering two times,
$t^-$ shortly before, and $t^+$ shortly after
a lockdown, we have
\[
\log\frac{P(t^{+})}{P(t^{-})} \approx \theta\left(\log\frac{c(t^{+})}{c(t^{-})}\right) \approx \theta
\log\frac{\mathcal{R}(t^{+})}{\mathcal{R}(t^{-})},
\]
where in the first approximation,
we neglected the variation in the potential labor force $S+E+A+R$, sinc,e between $t^{-}$ and $t^{+}$, 
the labor force available for production is  assumed to be only impacted by the variations in $c$; the second
approximation follows
from  \eqref{R0} again
neglecting variations in $S(t)$ in the short interval.
This gives the estimate
\begin{equation}\label{eq:theta_calc}
\theta \approx\log\frac{P(t^{+})}{P(t^{-})}/\log\frac{\mathcal{R}(t^{+})}{\mathcal{R}(t^{-})}.
\end{equation}

\begin{table}[htbp]
\centering {\footnotesize{}\caption{Alternative values of $\theta$, from various studies and variants.}\label{tab:ThetaEstimates}
}%
\begin{tabular}{cccccccccc}
\toprule 
\multicolumn{4}{c}{} &  &  &  &  &  & \tabularnewline
\midrule

\midrule 
\hline\hline
\multicolumn{4}{c}{} &  &  &  & \tabularnewline
 & {\footnotesize{}GDP loss} & {\footnotesize{}$\log\frac{P(t^{+})}{P(t^{-})}$} & {\footnotesize{}Source} & {\footnotesize{}$\mathcal{R}(t)$} & {\footnotesize{}$\log\frac{\mathcal{R}(t^{+})}{\mathcal{R}(t^{-})}$} &  & {\footnotesize{}Source} & \textbf{{\footnotesize{}Implied $\theta$}} & \tabularnewline
 & {\footnotesize{}(Instantan. } &  &  &  &  &  &  &  & \tabularnewline
{\footnotesize{}Country/Region} & {\footnotesize{}or monthly) } &  &  &  &  &  &  &  & \tabularnewline
\hline
\midrule 
{\footnotesize{}France} & {\footnotesize{}-36\%} & {\footnotesize{}-0.405} & {\footnotesize{}(A)} & {\footnotesize{}From 3 to 1} & {\footnotesize{}-1.099} &  & {\footnotesize{}(0)} & \textbf{{\footnotesize{}0.369}} & \tabularnewline
{\footnotesize{}France (2)} & {\footnotesize{}-} & {\footnotesize{}-} & {\footnotesize{}-} & {\footnotesize{}From 3 to 0.5} & {\footnotesize{}-1.792} &  & {\footnotesize{}(b)} & \textbf{{\footnotesize{}0.226}} & \tabularnewline

{\footnotesize{}France (3)} & {\footnotesize{}-} & {\footnotesize{}-} & {\footnotesize{}-} & {\footnotesize{}From 3.15 to 0.27} & {\footnotesize{}-2.457} &  & {\footnotesize{}(a)} & \textbf{{\footnotesize{}0.165}} & \tabularnewline
\hline
\midrule
{\footnotesize{}Italy} & {\footnotesize{}-36\%} & {\footnotesize{}-0.405} & {\footnotesize{}(B)} & {\footnotesize{}From 3.54 to 0.19} & {\footnotesize{}-2.925} &  & {\footnotesize{}(a)} & \textbf{{\footnotesize{}0.139}} & \tabularnewline
\hline
\midrule
{\footnotesize{}Germany} & {\footnotesize{}-30\%} & {\footnotesize{}-0.357} & {\footnotesize{}(B)}{\footnotesize{} } & {\footnotesize{}From 3 to 1} & {\footnotesize{}-1.099} &  & {\footnotesize{}(c)} & \textbf{{\footnotesize{}0.325}} & \tabularnewline
{\footnotesize{}Germany (2)} &  &  &  & {\footnotesize{}From 3.34 to 0.52} & {\footnotesize{}-1.860} &  & {\footnotesize{}(a)} & \textbf{{\footnotesize{}0.192}} & \tabularnewline
\hline
\midrule
{\footnotesize{}Sweden} & {\footnotesize{}-20\%} & {\footnotesize{}-0.223} & {\footnotesize{}(B)}{\footnotesize{} } & {\footnotesize{}From 3.04 to 2.02} & {\footnotesize{}-0.409} &  & {\footnotesize{}(a)} & \textbf{{\footnotesize{}0.545}} & \tabularnewline
\hline
\midrule
{\footnotesize{}US (late March)} & {\footnotesize{}-10.0\%} & {\footnotesize{}-0.105} & {\footnotesize{}(C)}{\footnotesize{} } & {\footnotesize{}From 1.50 (to 1)} & {\footnotesize{}-0.405} &  & {\footnotesize{}(d)} & \textbf{{\footnotesize{}0.260}} & \tabularnewline
{\footnotesize{}US (2) (late March)} & - & - & - & {\footnotesize{}From 2,20 (to 1)} & {\footnotesize{}-0.788} &  & {\footnotesize{}(e)} & \textbf{{\footnotesize{}0.134}} & \tabularnewline
{\footnotesize{}US (3) (late March)} & {\footnotesize{}-10\%} & {\footnotesize{}-} & - & {\footnotesize{}From 2 to 1} & {\footnotesize{}-0.693} &  & {\footnotesize{}(f)} & \textbf{{\footnotesize{}0.152}} & \tabularnewline
\hline
\midrule
{\footnotesize{}US (4) (May)} & {\footnotesize{}-31.0\%} & {\footnotesize{}-0.371} & {\footnotesize{}(C)}{\footnotesize{} } & {\footnotesize{}From 3 to 1 } & {\footnotesize{}-1.099} &  & {\footnotesize{}(0)} & \textbf{{\footnotesize{}0.338}} & \tabularnewline
{\footnotesize{}US (5) (May)} & {\footnotesize{}-34.9\%} & {\footnotesize{}-0.430} & {\footnotesize{}(D)}{\footnotesize{} } & {\footnotesize{}From 3 to 1} & {\footnotesize{}-1.099} &  & {\footnotesize{}(0)} & \textbf{{\footnotesize{}0.391}} & \tabularnewline
\hline\hline

{\footnotesize{}Our preferred benchmark} & {\footnotesize{}-23.3\%} & {\footnotesize{}-0.265} & {\footnotesize{}-} & {\footnotesize{}From 2 to 0.8} & {\footnotesize{}-0.916} &   & {\footnotesize{}(*)} & \textbf{{\footnotesize{}1/3 }} & \tabularnewline
\midrule
\hline\hline
\multicolumn{10}{l}{{\tiny{}Notes: specification and sources.}}\tabularnewline
\multicolumn{10}{l}{{\tiny{}(0): Priors; (*): our simulated benchmark outcome; }  {\tiny{}(a): }{\tiny{}\cite{bryant2020estimating}}; {\tiny{}(b): }{\tiny{}\cite{dimdorel}}}
\tabularnewline
\multicolumn{10}{l}{{\tiny{}(c): }{\tiny{}\cite{hamouda2020schatzung}}; {\tiny{}(d): \cite{eichenbaum2020macroeconomics} }; {\tiny{}(e): }{\tiny{}\cite{riou2020pattern}}   }\tabularnewline
\multicolumn{10}{l}{{\tiny{}(f): }{\tiny{}\cite{jones2020optimal}}; {\tiny{}(A) INSEE, April 2020, Point conjoncture}}\tabularnewline
\multicolumn{10}{l}{{\tiny{}(B) OECD Nowcasts, Coronavirus: The world
economy in freefall, http://www.oecd.org/economy/}}\tabularnewline
\multicolumn{10}{l}{{\tiny{}(C) Fed Atlanta GDPNow tracker (8/10/2020)}}\tabularnewline
\multicolumn{10}{l}{{\tiny{}(D) New York Fed Staff Nowcast https://www.forexlive.com/centralbank/!/the-ny-fed-nowcast-tracks-2q-growth-at-3122-20200508 }}\tabularnewline
\multicolumn{10}{l}{{\tiny{}(E) Sweden: Forecast for 2020 are estimated to
be between -6.9\% and 9.7\% by Statistics Sweden and the Riskbank,
approx. 2/3rd of the decline in France. }}\tabularnewline
\multicolumn{10}{l}{{\tiny{}https://www.cnbc.com/2020/04/30/coronavirus-sweden-economy-to-contract-as-severely-as-the-rest-of-europe.html}}\tabularnewline
\hline
\end{tabular}
\end{table}

\vspace{0.4cm}

Table \ref{tab:ThetaEstimates} shows various examples of co-variations of $\mathcal{R}$ and instantaneous GDP variations estimating from now-casting studies from various economic and statistical institutions after the lockdown from various countries. The parameters displayed have different sources. Some come from estimates based on data, other are simulated from epidemiologic models, and some are used in calibrations in economic papers, as a way to compare ourselves to the previous studies. The variability in the value of $\theta$ in the table is due to this diversity of methods. The range is between $0.166$ and $1.142$, with an average of $0.27$ and a s.d. of $0.12$. We select a value of $1/3$ that can be adapted to any country or period as indicated in the table.\footnote{A careful reader might notice that in the last row, the variation of the reproduction number and our best GDP response correspond to a value of $\theta=0.290$, slightly below our parameter choice $(1/3)$, the difference being due to the approximation in Equation \ref{eq:theta_calc}.}

In order to identify the time horizon of our analysis, we make several assumptions
about the evolution of the epidemic. In particular, we
assume that the policy assessment can be made with a specific
time frame in mind, after which technological advancements
like a therapy or a vaccine  will drastically reduce
the negative effects of the infection:  \cite{HHS}
and \cite{TT} predict a vaccine in early 2021,
and challenge trials will anticipate things even further.
 We then assume a prototypical
situation in which 
the epidemic has started unobserved in January 2020,
and we assume that it 
 will resolve at the end of the first quarter of 2021,
 hence we
take $T=460$ days.
Clearly, these periods are only indicative, and 
one can adapt the time frame when more reliable perspectives
are identifiable.

\vskip 1cm

The choice of the social cost $a$ of a Covid death is
particularly complex, as it depends on a variety of
socio-political and economic factors. 
We take a value of $a\approx 10,000$.
To assess a value of $a$, note that it implies from Table \ref{comparison2_experiment1} a decline of GDP of 76.7\% from day 85 to day 460, that is a loss of yearly GDP equal to $\frac{460-85}{365}\times 0.767=0.7886$, that is, a $21.2\%$ decline in yearly GDP. The gain is a decline in mortality of 0.74\%. If these numbers where applied to the case of France, with a GDP of 2778 billions USD in 2018 and a population of 67 million, each live saved would correspond to 1.758 million USD.  This is smaller than the statistical value of life currently estimated in developed economies, that is closer to 3 million euros \cite{baumstark2013elements} but one has to remember that most of the fatality have been for older individuals. According to various statistical sources \cite{Stat2020}, only 10\% of the deaths were aged below 65, while 71\% were aged above 75. We also report in Appendix Table \ref{tab:Fatality_Rates_Age} the fatality rates by age as available from recent studies. This implies that the right value for the statistical value of life in the exercise has to be lower than the usual estimates
\cite{Zenioetal}. Another factor is that the government lockdown was based on lower estimations for the proportion of deaths. The current range is large, going from 0.4\% for symptomatic according to the CDC or 0.37\% per infected in the so-called Gemeinde Gangelt study in Germany \cite{streeck2020vorlaufiges} to more than 4\%.  \color{black}


Note that taking into account the risk aversion of the loss function does not change significantly the numbers involved and the order of magnitudes are preserved: risk-aversion mostly affect the numbers as $(0.9)^2$ that is by 20\% only. To see this, consider a small time interval of length $\Delta t=1$, so that the loss function is $\left(\mathcal{V}(P)\Delta t+aD\right)$ where $D$ is the number deaths over that interval: differentiating the expression along the iso-loss curve, the slope of the iso-loss (indifference) curve is exactly: 

\[
\frac{dP}{dD}=\frac{a}{-\mathcal{V}'(P)}=\frac{a}{P^{-\sigma}}=aP^{\sigma}
\]
 hence the adjustment factor is of the order of magnitude of the fraction of loss of  GDP $P$ to the square.

It is seen in the examples below that this value of the
social cost of Covid death corresponds to prefer a substantial
mortality reduction over GDP preservation, a phenomenon that, although
sporadically opposed by some political groups, has found
substantial support in most industrialized countries
\cite{Hart}.
Such value of $a$ is large enough that the optimal control functions
determine an effective containment of the spread of the
virus; this implies that the minimum of $\mathcal L$ occurs
where the total mortality is also likely to be convex as
function of the control, and
that the minimum is likely to be unique (see Section
\ref{uniqueness}). 

\vskip 0.5cm

We analyze below several examples of containment:
 \begin{itemize}
 \item A first policy is a containment with opening level 
 $\overline c$ until the end of the study period.
 \item A second policy is a containment  with opening level  $\underline c$
 till day $120$,  followed by a higher opening level 
 $\overline c$ until the end of the study period.
 \item A third policy is to implement several cycles of alternated higher and lower opening .
 \end{itemize}

As the presence of the virus went substantially unnoticed
in the early stages in most locations, and then some
time we needed to pass the required legislation,
we 
assume that all lockdowns begin on day 85; this
corresponds to March 25. Lock down in most countries,
except China,
started between March 9 and April 23, with a median
on March $25$\footnote{https://en.wikipedia.org/wiki/National\_responses\_to\_the\_COVID-19\_pandemic}.
When considering reopening, we use Day $120$,
which corresponds to 
April 29. For countries which have substantially reduced
containment measures as of May $5$th, the median
end date of lockdown has been April $24$, with about $20$
countries still in lockdown.

All the numerical examples below are computed by Matlab R2016, using   discretized ordinary differential equations (``ode45'' or ``ode23tb'' functions) and integrals.

\subsection{Optimal unique lockdown} \label{SectUniqLock}
We consider in this section a unique lockdown measure
imposed on Day 85 (March 25):
the opening level is reduced at level $\overline c$, and 
these restrictions are kept in place for 
the entire period, which is till Day 460, April 4, 2021.
While this could have been a viable policy,
implementing a moderate containment,
the extent of the resulting 
GDP loss turns out to be dramatic.
Figure \ref{onelock}
compares production reduction and mortality for the 
various levels of $\overline c$.

\begin{figure}[h!]
    \centering
    \includegraphics[scale=0.35]{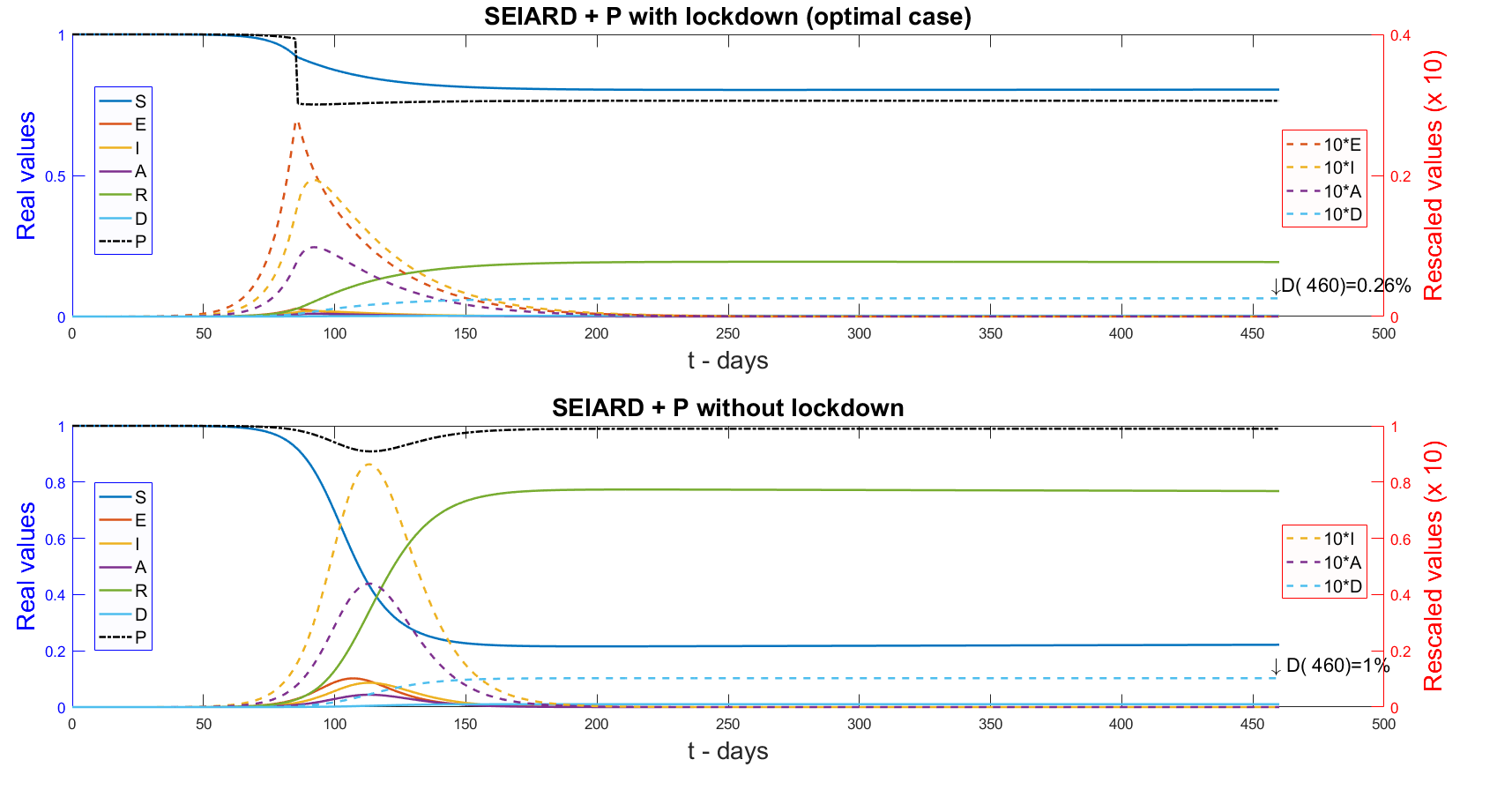}
    \caption{Comparison between the optimal moblity level and the case of no restrictions.}
    \label{plot_opt_lockdown}
\end{figure}
The optimal opening level is numerically determined to be 
$\overline c=76.7\%$. Figure \ref{plot_opt_lockdown}
compares the optimal containment policy with the case of 
no containment; Figure
\ref{comparison2_experiment1}, compares the optimal case
with two different policies corresponding to  less or more reduced opening  levels.

\begin{figure}[h!]
    \centering
    \includegraphics[scale=0.35]{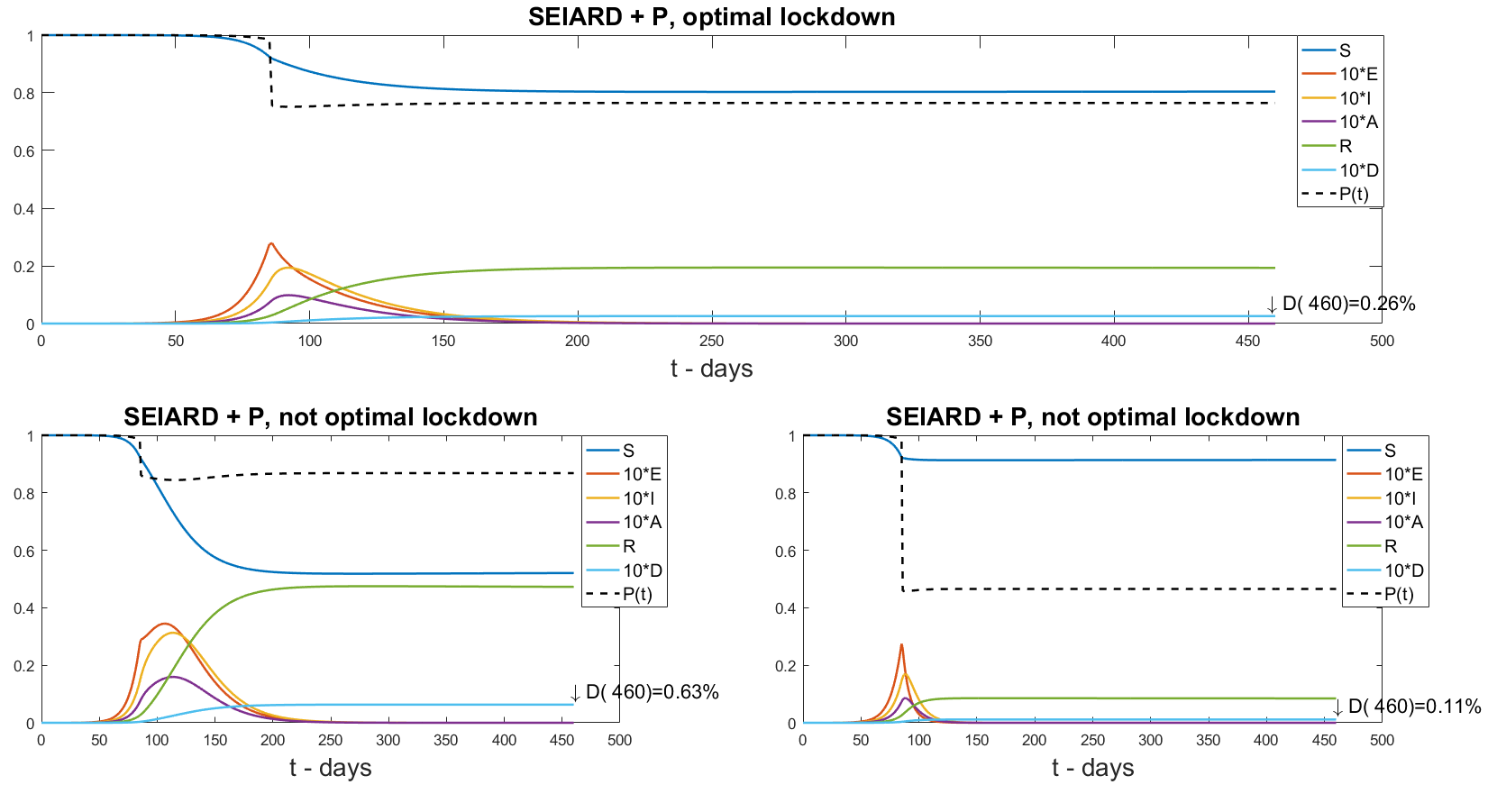}
    \caption{Fig. \ref{comparison2_experiment1}.A Top: optimal opening level and related epidemic variables. Fig. \ref{comparison2_experiment1}.B, Bottom left: a opening level higher than optimal. Fig.
    \ref{comparison2_experiment1}.C, Bottom right: a opening
    level below optimal.}
    \label{comparison2_experiment1}
\end{figure}

\begin{table}[htbp]
  \centering
  \caption{Single extended lockdown.
  }
  
    \begin{tabular}{lcccc}
     
 & \multicolumn{1}{c}{Epidemic} & \multicolumn{1}{c}{ 
        Insufficient restrictions} & \multicolumn{1}{c}{Optimal}& \multicolumn{1}{c}{Excessive restrictions} \\    & \multicolumn{1}{c}{No policy } & \multicolumn{1}{c}{ 
        Fig. \ref{comparison2_experiment1}.B } & \multicolumn{1}{c}{Fig. \ref{comparison2_experiment1}.A \%} & \multicolumn{1}{c}{Fig. \ref{comparison2_experiment1}.C } \\ \hline \vspace{0.1cm}
     Containment and opening level $\bar{c}$ & 100\% & 87.4\% & 76.6\% & 46.6\% \\  \vspace{0.1cm}    
   Mortality at Day $85$ & 0.03\%  & 0.03\%  & 0.03\%  & 0.03\% \\ \vspace{0.1cm}
    Total mortality at Day $460$ & 1.03\%  & 0.63\%  & 0.26\%  & 0.11\% \\  \vspace{0.1cm}
    Total mortality reduction & 0\%   & 38.47\% & 74.85\% & 88.96\% \\ \vspace{0.1cm}
    Annualized 1st quarter GDP loss   & 2.42\%  & 2.89\%  & 3.29\%  & 4.43\% \\ \vspace{0.1cm}
    Total annualized GDP loss  & 1.78\% & 11.28\%  & 19.45\% & 43.67\% \\ \vspace{0.1cm}
    Value loss functional & 129.53 & 88.45 & 75.82 & 130.59\\ \hline 
    
    \end{tabular}%
  \label{tab:onlylockdown}%
\end{table}%

Notice that in case of no restrictions, the total mortality
is about $1\%$, and  annualized GDP loss 
due to passage of the virus is $1.78\%$.
As noted in the Introduction, the lockdown realizes a sharp containment of
mortality, but the constraint of protracted measures
causes a dramatic GDP loss.

Figure \ref{reproduction_exp1} compares the time evolution of the reproduction numbers in the cases of optimal lockdown and no lockdown: notice that the optimal lockdown quickly brings
the reproduction number to slightly below $1$, keeping it there for the
entire period.

\begin{figure}[h!]
    \centering
    \includegraphics[scale=0.3]{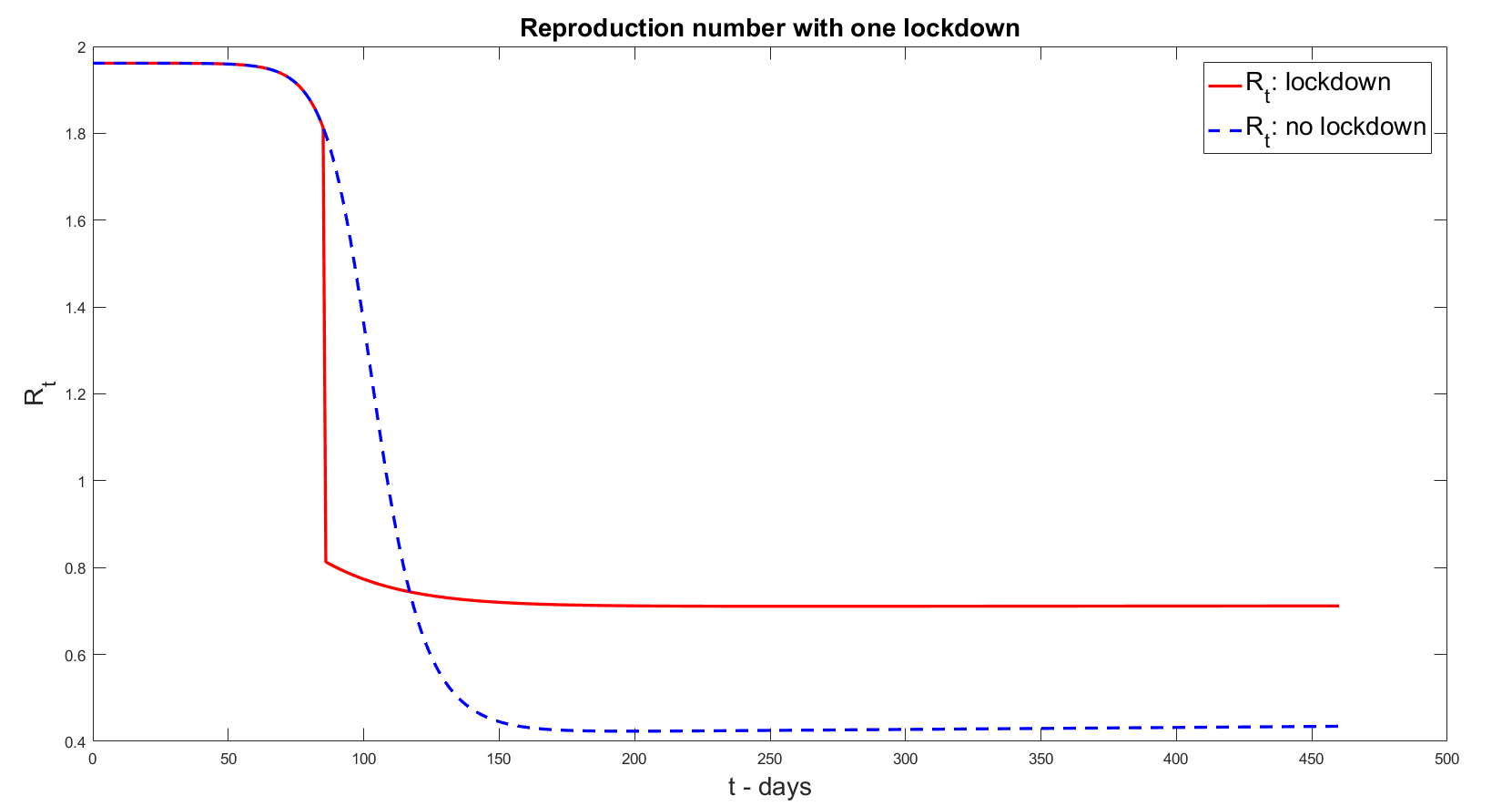}
    \caption{Reproduction number in the case of the optimal unique extended lockdown and without lockdown.}
    \label{reproduction_exp1}
\end{figure}

\vskip 2cm

\subsection{Optimal reopening level}
\label{sec:opt_reopening}
Most countries have imposed severe restrictions
after a first period, which is at Day $85$ in
our model, followed by a sizeable reopening after 
about two months. To simulate this situation,
we assume that at Day $85$
the  opening level has been fixed at $\underline c=0.5$; as the previous example shows,
this would not be optimal if imposed for a long
time, and it incorporates the assumption
of a release after a relative short period. In  accordance to the current reopening in many countries,
 the containment is
relaxed  to level $\overline c$
at Day $120$. 
Clearly, in this case a loss of production has already been
incurred because of the initial containment, and 
we have selected an opening level that  reproduces 
the observed loss of GDP in the first quarter at
an annual rate of $4$-$5\%$, see Table \ref{tab:onlyreopening}, Line 5.

We then numerically determine the optimal level of reopening,
which turns out to be
at   $\overline c \approx 90.1\%$.
Figure  \ref{comparison2_exp2} compares the optimal solution
with non-optimal ones, and a detailed comparison
of some of the outcomes is carried out in Table \ref{tab:onlyreopening}.

\begin{figure}[h!]
    \centering
    \includegraphics[scale=0.35]{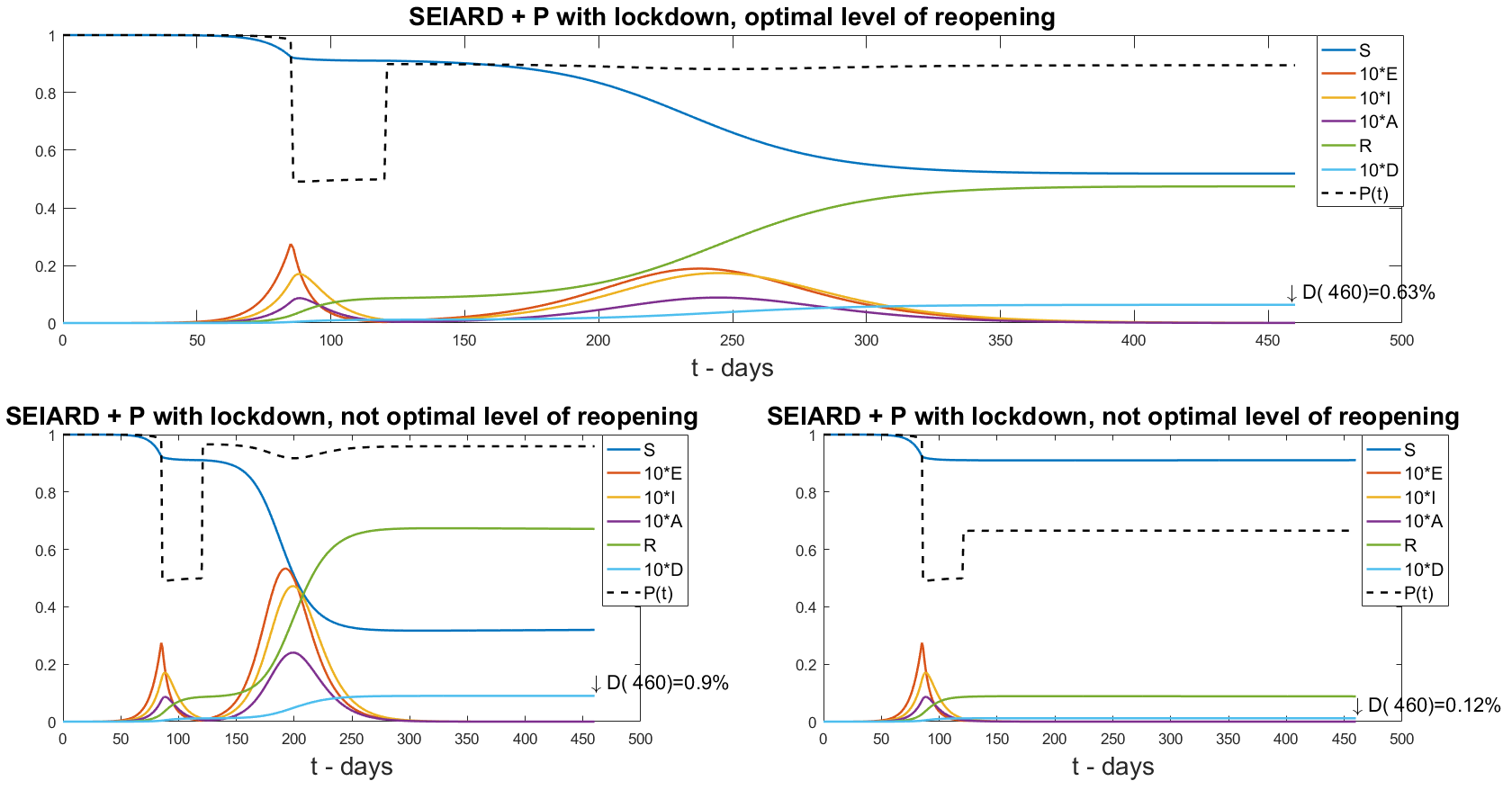}
    \caption{Fig. \ref{comparison2_exp2}.A, Top: optimal reopening level. Fig. \ref{comparison2_exp2}.B, Bottom left: an excessive reopening. Fig. \ref{comparison2_exp2}.C, Bottom right: a suboptimal reopening level.}
    \label{comparison2_exp2}
\end{figure}

\begin{table}[htbp]
  \centering
  \caption{One reopening after a lockdown.}
    \begin{tabular}{lcccc}
         & \multicolumn{1}{c}{Epidemic} & \multicolumn{1}{c}{ 
        High reopening} & \multicolumn{1}{c}{Opt. reopening}& \multicolumn{1}{c}{Limited reopening} \\
          & \multicolumn{1}{c}{No policy } & \multicolumn{1}{c}{ 
        Fig. \ref{comparison2_exp2}.B } & \multicolumn{1}{c}{Fig. \ref{comparison2_exp2}.A } & \multicolumn{1}{c}{Fig. \ref{comparison2_exp2}.C } \\ \hline \vspace{0.1cm}
    Reopening level $\bar{c}$ & 100\% & 96.8\% & 90.1\% & 66\% \\ \vspace{0.1cm}     
    Mortality at Day $85$ & 0.03\%  & 0.03\%  & 0.03\%  & 0.03\% \\ \vspace{0.1cm}
    Total mortality at Day $460$ & 1.03\%  & 0.90\%  & 0.63\%  & 0.12\% \\ \vspace{0.1cm}
    Mortality reduction & 0\%  & 12.78 & 38.43 & 88.54 \\  \vspace{0.1cm}
    Annualized 1st quarter GDP loss & 2.42\%  & 4.30\%  & 4.30\%  & 4.30\% \\  \vspace{0.1cm}
     Total annualized GDP loss & 1.78\%  & 7.53\% & 12.02\% & 28.72\% \\  \vspace{0.1cm}
     Value loss functional & 129.53 & 103.49 & 102.17 & 109.5\\ \hline
    \end{tabular}%
  \label{tab:onlyreopening}%
\end{table}%

Notice that the optimal reopening level achieves a 
substantial herd immunity by the so called "flattening the curve".
Because of that, the mortality reduction reaches $38.43\%$
only, with a more moderate, but still sizeable, annualized GDP loss of
$12.02\%$. Observe that deviations from optimality are
extremely ineffective.

The reproduction number in  Figure \ref{reproduction_exp2}, after drastically decreasing
and then going back higher, finally stabilizes around $0.8$.
\begin{figure}[h!]
    \centering
    \includegraphics[scale=0.3]{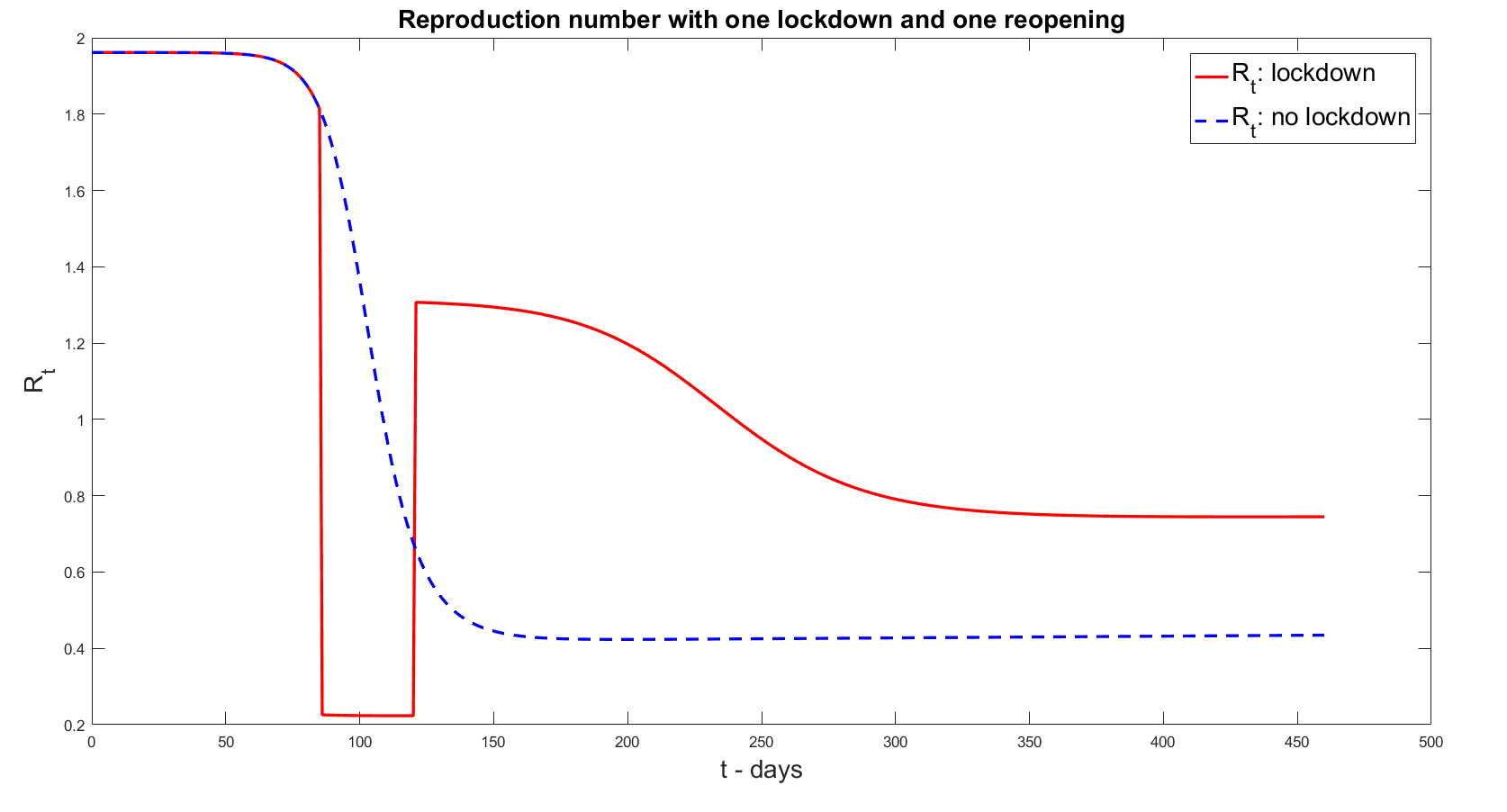}
    \caption{Reproduction number in the case of the optimal reopening.}
    \label{reproduction_exp2}
\end{figure}

\newpage

\subsection{Optimal periodic containment}
\label{sec:opt_periodic_cont}
In this section we show some numerical results related to a periodic containment. We assume that after a drastic lockdown at an opening level of  $\underline c=0.5$, there is a
complete reopening, followed by two more lockdowns
at an opening level $\overline c$: we
optimize over $\overline c$, see Figure \ref{comparison2_exp3_case1}. Production loss vs. mortality
is plotted in Figure \ref{GDP_mortality_exp3_case1};
notice the peculiar effect of  too sharp lockdowns
when these are reapplied at Days $170$ and $230$: because of excessive containment,
the outbreak restarts later and
the mortality ends up being higher even with more GDP loss
than with the optimal control.
A third lockdown would be necessary in this case.

\begin{figure}[h!]
    \centering
    \includegraphics[scale=0.35]{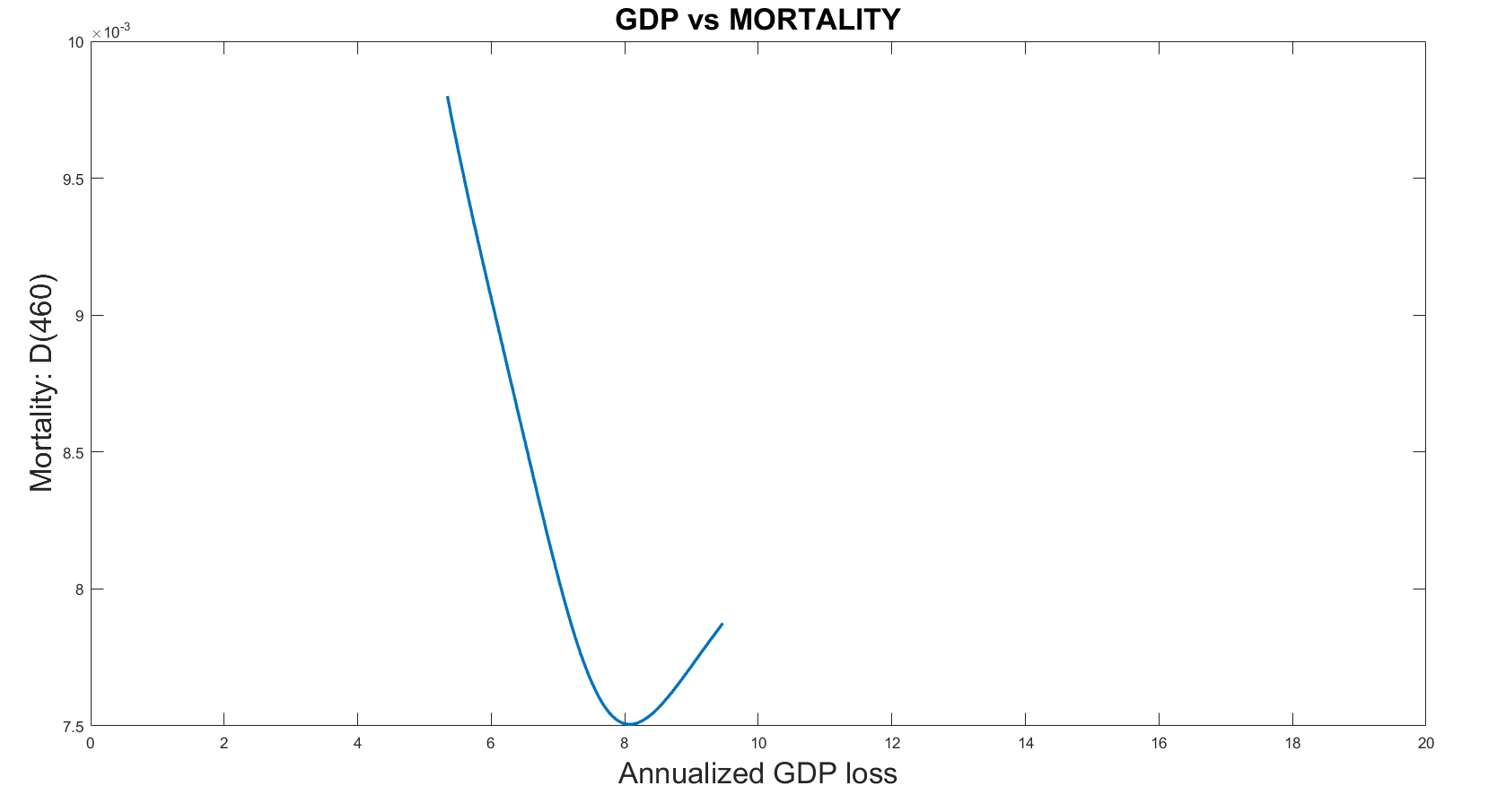}
    \caption{Production loss  and fraction of deaths   in  a periodic containment; curve is parametrized by the
    opening level during the two  containments
    that follow a first, fixed one.}
    \label{GDP_mortality_exp3_case1}
\end{figure}
The optimal opening level turns out to be $\overline c=72.9\%$.
This solution provides a moderate reduction in mortality
with a contained economic damage at an annualized
GDP loss of $7.44\%$, see Table
\ref{tab:exp3_case1}. Herd immunity is reached 
with a very low number of infected at the time, 
which is shown in  Figure \ref{reproduction_exp3_case1}
to be approximately Day $250$, when the 
reproduction number is finally set to just below $1$.
This is the ideal strategy to achieve herd
immunity, as discussed in \cite{BM},
and it has been automatically identified by the
 optimization process.

\begin{figure}[h!]
    \centering
    \includegraphics[scale=0.35]{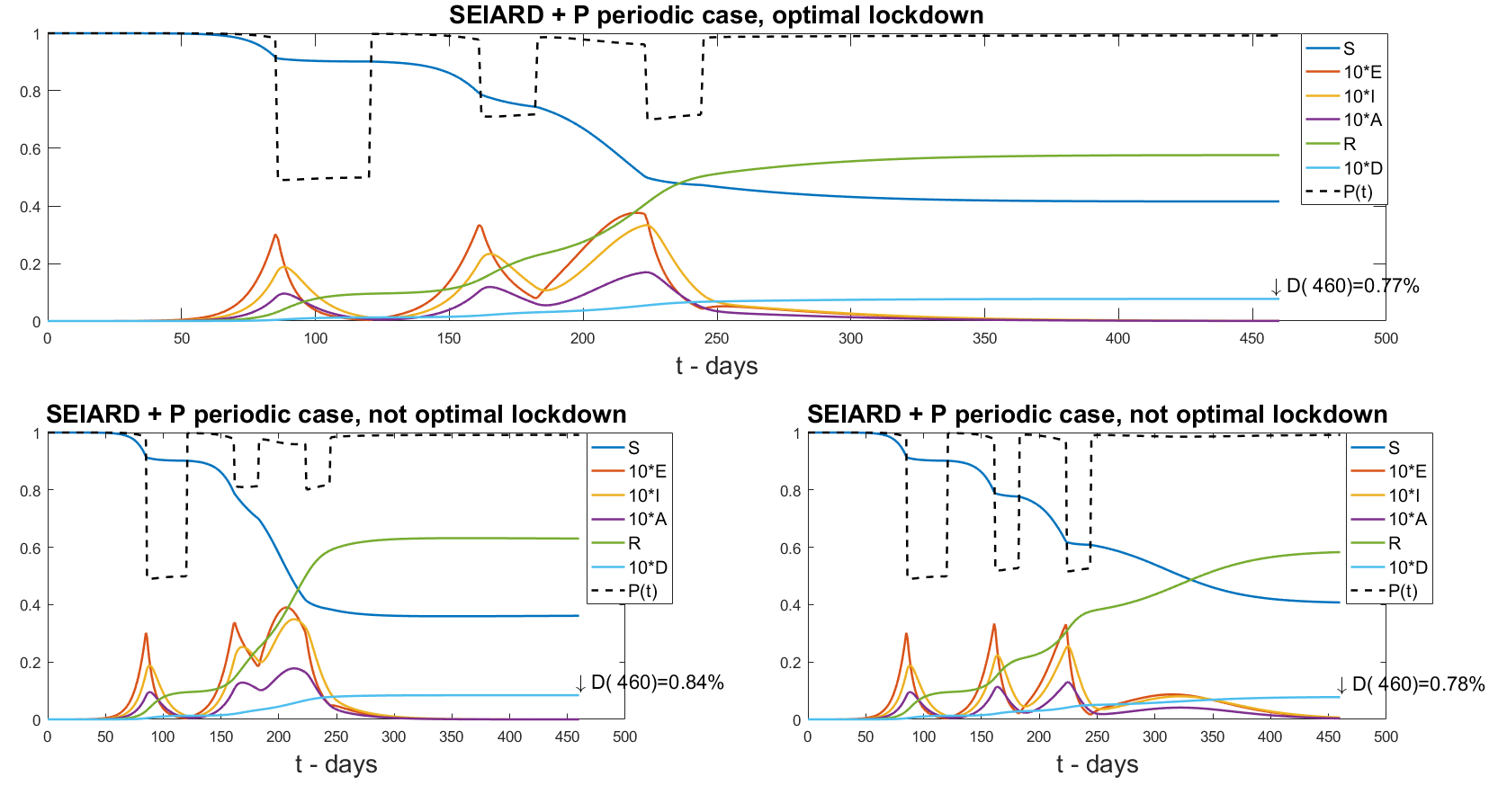}
    \caption{Fig. \ref{comparison2_exp3_case1}.A Top: optimal opening level. Fig. \ref{comparison2_exp3_case1}.B Bottom left: excessive opening. Fig. \ref{comparison2_exp3_case1}. C Bottom right: 
    excessively reduced opening.}
    \label{comparison2_exp3_case1}
\end{figure}

\begin{table}[htbp]
   \centering
  \caption{Periodic containment optimized over the
    opening level during the two containments 
    that follow a first, fixed one}
    \begin{tabular}{lcccc}
     & \multicolumn{1}{c}{Epidemic} & \multicolumn{1}{c}{ 
        Low lockdown} & \multicolumn{1}{c}{Opt. lockdown}& \multicolumn{1}{c}{Stricter lockdown} \\
          & \multicolumn{1}{c}{No policy } & \multicolumn{1}{c}{Fig. \ref{comparison2_exp3_case1}.B } & \multicolumn{1}{c}{Fig. \ref{comparison2_exp3_case1}.A } & \multicolumn{1}{c}{Fig. \ref{comparison2_exp3_case1}.C } \\ \hline \vspace{0.1cm}
    Second and third reopening level $\bar{c}$ & 100\% & 83.3\% & 72.9\% & 
    53.2\% \\  \vspace{0.1cm}   
   Mortality at Day $85$ & 0.03\%  & 0.03\%  & 0.03\%  & 0.03\% \\ \vspace{0.1cm}
    Total mortality at Day $460$ & 1.03\%  & 0.84\%  & 0.77\%  & 0.78\% \\ \vspace{0.1cm}
    Mortality reduction & 0\%  & 18.13\% & 25.15\% & 24.36\% \\ \vspace{0.1cm}
    Annualized 1st quarter GDP loss & 2.44\%  & 4.32\%  & 4.32\%  & 4.32\% \\ \vspace{0.1cm} 
    Total annualized GDP loss & 1.78\%  & 6.62\%  & 7.44\%  & 9.17\% \\ \vspace{0.1cm}   
     Value loss functional & 129.53 & 107.61 & 107.41 & 108.05\\ \hline    
    \end{tabular}%
  \label{tab:exp3_case1}%
\end{table}%

\begin{figure}[h!]
    \centering
    \includegraphics[scale=0.3]{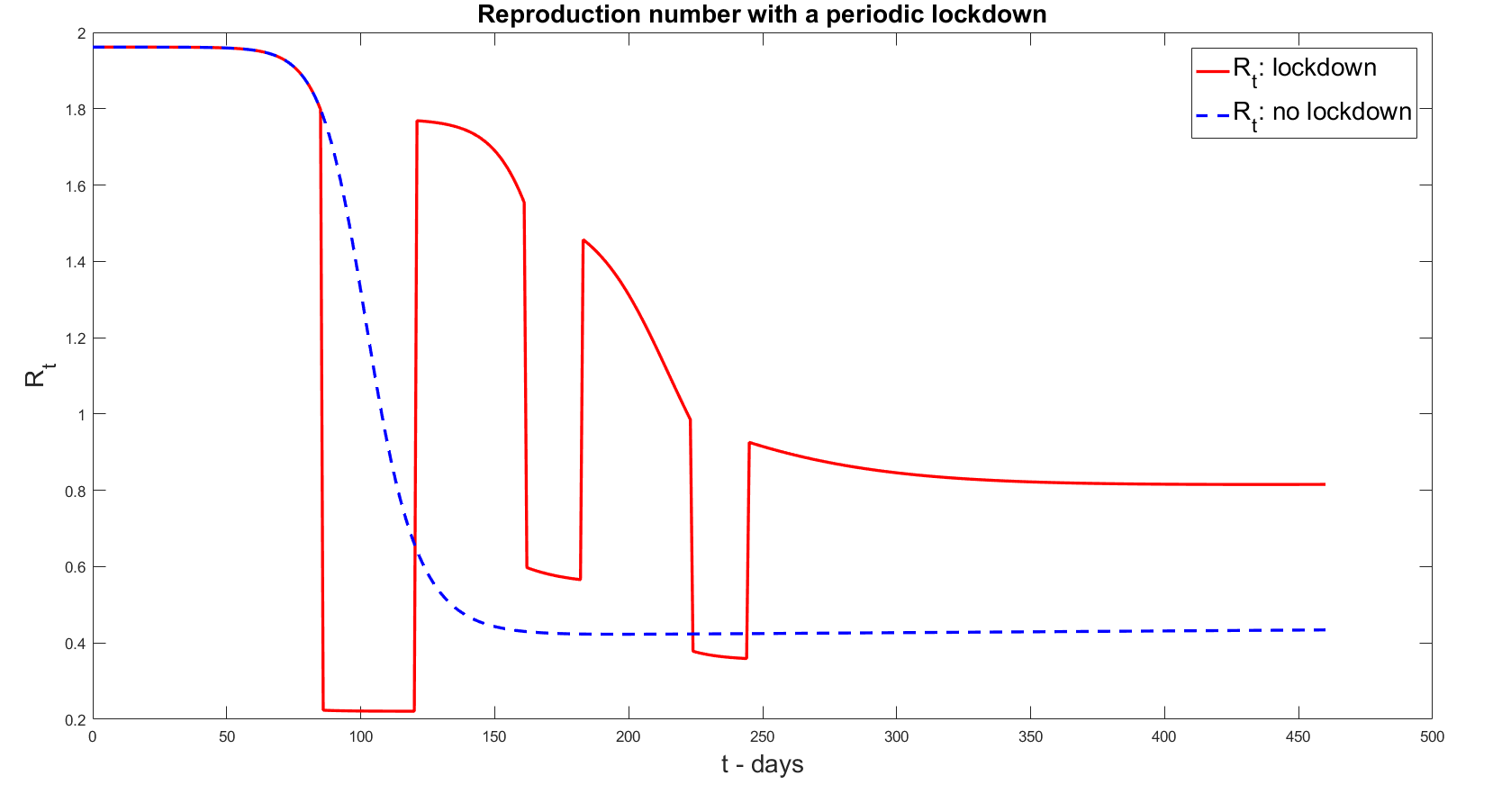}
    \caption{Reproduction number when
    control is optimized over the
    opening level during the two containments 
    that follow a first, fixed one.}
    \label{reproduction_exp3_case1}
\end{figure}


\subsection{Optimization over three parameters}
\label{sec:opt_periodic_cont}
In this example we optimize over three parameters:
after a first, fixed containment from Day $85$ to 
Day $120$ at opening level $\underline c=0.5$, two
more containment periods take place, at 
opening level $\overline c_1$, each for a time length $\overline \tau$, 
interspersed with reopening at level $ \overline c_2$: we optimize
over $\overline c_1,  \overline c_2 $, and $\overline \tau$.
A comparison of the optimal solution with others is in 
Figure  \ref{comparison2_exp4}; a summary is in 
Table \ref{tab:exp3_case_3_par}; and the
reproduction number is plotted in 
\ref{reproduction_exp4}.

\begin{figure}[h!]
    \centering
    \includegraphics[scale=0.35]{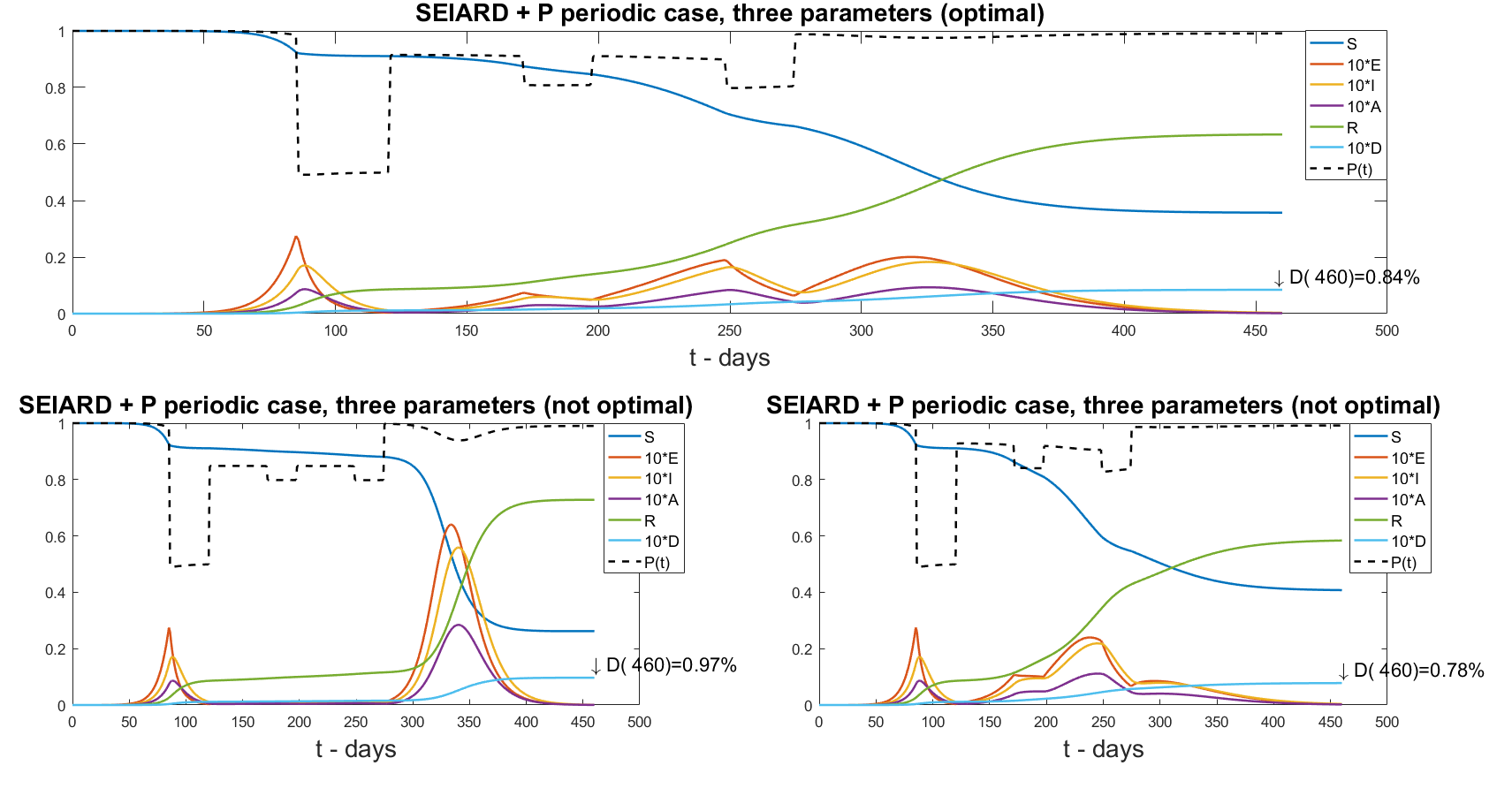}
    \caption{Fig. \ref{comparison2_exp4}.A Top: Optimal policy in the case of three parameters: opening level at containments
    after the first, fixed one, duration of containments,
    and level of in-between reopening. Fig. \ref{comparison2_exp4}.B Bottom left: a non optimal policy. Fig. \ref{comparison2_exp4}. C Bottom right: excessive reopening.}
    \label{comparison2_exp4}
\end{figure}

\begin{figure}[h!]
    \centering
    \includegraphics[scale=0.35]{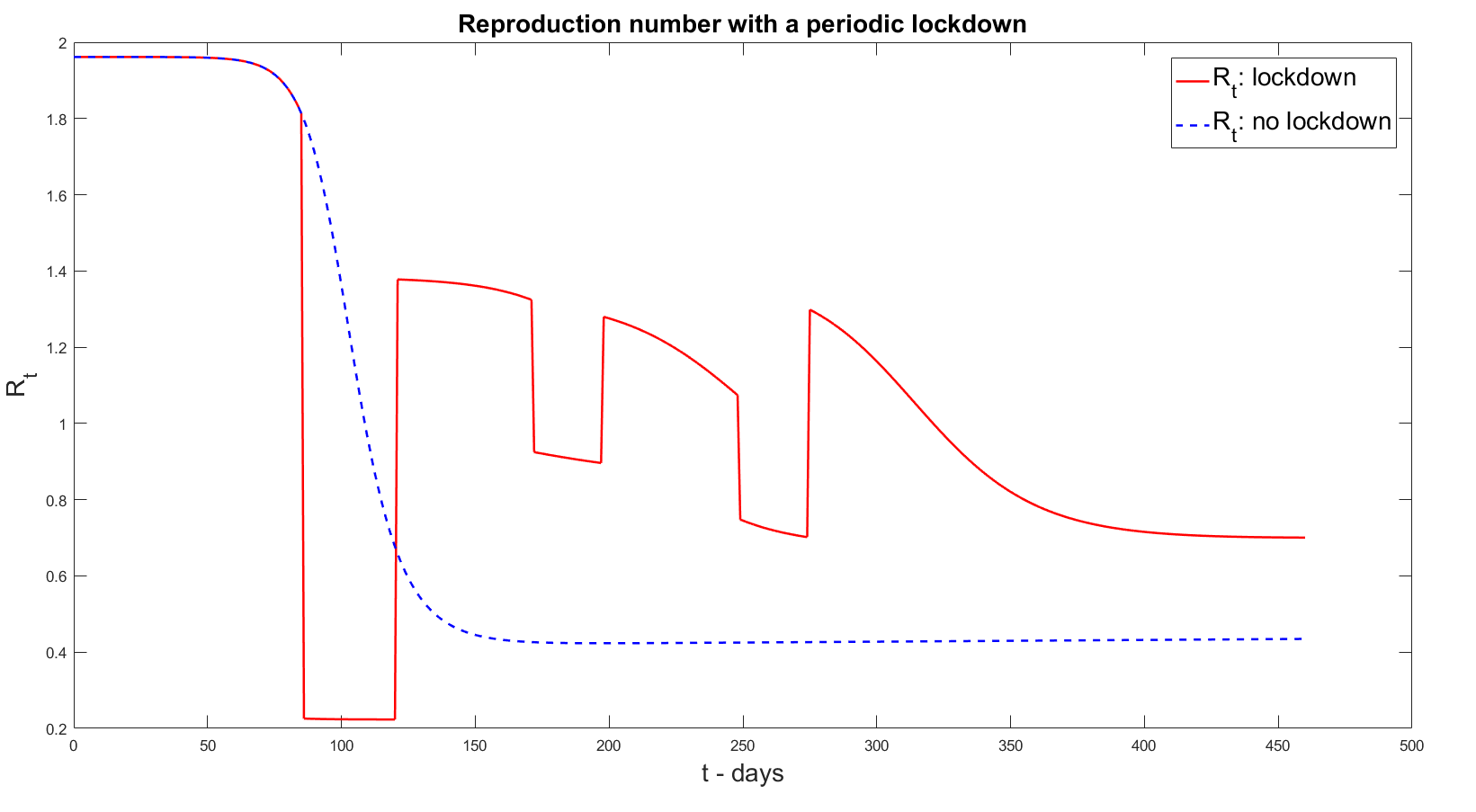}
    \caption{Reproduction number in the case of the optimization of three parameters.}
    \label{reproduction_exp4}
\end{figure}

\begin{table}[htbp]
    \centering
  \caption{Periodic containment: optimization
  over opening level at containments, duration of containment,
  level of reopening.}
    \begin{tabular}{lcccc}
     & \multicolumn{1}{c}{Epidemy} & \multicolumn{1}{c}{ 
         Stricter policy} & \multicolumn{1}{c}{Optimal policy}& \multicolumn{1}{c}{Mild policy} \\
          & \multicolumn{1}{c}{No policy  } & \multicolumn{1}{c}{Fig. \ref{comparison2_exp4}.B  } & \multicolumn{1}{c}{Fig. \ref{comparison2_exp4}.A } & \multicolumn{1}{c}{Fig. \ref{comparison2_exp4}.C} \\ \hline \vspace{0.1cm}
     Successive opening levels $c$ & 100\%  & 80\%  & 81.4\%  & 85\%  \\  \vspace{0.1cm}
    Level of reopening & 100\%  & 85\%  & 91.8\%  & 93\%  \\  \vspace{0.1cm}
    Optimal period of closure (in days) & / & 25 & 25 & 25 \\ \vspace{0.1cm}
   Mortality at Day $85$ & 0.03\%   & 0.03\%   & 0.03\%   & 0.03\%  \\ \vspace{0.1cm}
    Total mortality at Day $460$ & 1.03\%   & 0.97\%   & 0.84\%   & 0.78\%  \\ \vspace{0.1cm}
    Mortality reduction & 0\%  & 5.76\%  & 17.85\%  & 24.27\%  \\ \vspace{0.1cm}
    Annualized 1st quarter GDP loss & 2.42\%   & 4.30\%   & 4.30\%   & 4.30\%  \\ \vspace{0.1cm} 
    Total annualized GDP loss & 1.78\%  & 10.63\%   & 8.91\%   & 8.18\%  \\ \vspace{0.1cm}   
     Value loss functional & 129.53  & 102.89  & 102.6  & 102.14 \\ \hline    
    \end{tabular}%
  \label{tab:exp3_case_3_par}%
\end{table}%

\vskip 5cm
\newpage

\section{Sensitivity Analysis}
We provide, in this section, a Sensitivity Analysis (SA) to evaluate how some of the parameters influence the minimum of the loss functional. Initially, SA is performed by a \textit{global sensitivity analysis} approach using the Sensitivity Analysis tool of Matlab. Then, we also provide a local sensitivity analysis where we calculate the optimal policy varying one parameter at a time. 

The global sensitivity approach uses a representative set of samples of parameters to evaluate the loss functional, which includes also the level of lockdown or reopening  depending on the numerical experiment under investigation (see previous sections \ref{SectUniqLock}, \ref{sec:opt_reopening} and \ref{sec:opt_periodic_cont}). The workflow is as follows:
\begin{enumerate}
    \item For each parameter, including the opening level
    $\bar{c}$ during
    containment or reopening, we generate multiple values that the parameters can assume, namely we define the parameter sample interval by specifying a uniform probability distribution for each parameter. We create 200 combinations of these parameters.  
    \item\label{item:sa} Then, find the solution of the SEAIRD model and evaluate the loss functional at each combination of parameter values and choose the combination which gives the minimum value of the loss functional.
    \item\label{item:sa_plus_opt} Fixing the ``best outcome'' combination found in (\ref{item:sa}), except the opening levels $\bar{c}$, we run again the optimization procedure, used in the previous sections, to find the optimal value of the opening levels for that combination of parameters.       
\end{enumerate}
 Table \ref{tab:interval} indicates the ranges
 for each parameter.	
\begin{table}[htbp]
	\centering
	\caption{Ranges utilized for the sensitivity analysis}
	\begin{tabular}{lc}
	     \multicolumn{1}{c}{Parameter } & \multicolumn{1}{c}{Range}  \\ \hline \vspace{0.1cm}
		 $\delta$ & $[0.0014, 0.0028]$  \\   \vspace{0.2cm}
		$s$ & $[0.05, 0.15]$   \\   \vspace{0.2cm}
		$r$ & $[0, 0.05]$   \\   \vspace{0.2cm}
		$\sigma$   & $[1.01, 4]$ \\   \vspace{0.2cm}
		$\theta$  & $[\frac{10}{35}, \frac{1}{2}]$  \\   \vspace{0.2cm}
		$a$  & $[8000, 20000]$  \\   \vspace{0.2cm}
		$\bar{c}$ & $[0.5,1]$ \\  \hline \vspace{0.2cm}
	\end{tabular}%
	\label{tab:interval}%
\end{table}%
As expected, the parameter that carries a greater weight on the functional value is represented by $r$. In fact, this is clear in Figure \ref{parameter_influence}, where, as a result of the sensitivity analysis, a tornado plot is displayed. The coefficients are plotted in order of influence of parameters on the loss functional, starting with those with greatest magnitude of influence from the top of the chart.  

Below, for each numerical experiment, we provide a table 
comparing results from the optimal case determined
with our methods, and the optimal case after the
SA described in  (\ref{item:sa}) and (\ref{item:sa_plus_opt})
above.  
For completeness, we also provide a local sensitivity analysis which is a technique to analyze the effect of one parameter on the cost function, and especially on the optimal policy. We take into account, as prototype, the first experiment where the optimal level of lockdown has to be found. See Table \ref{tab:local_sen_case1}. 

\begin{table}[htbp]
  \centering
  \caption{Comparison of the optimal values of Section \ref{SectUniqLock} with the result of 
  the global sensitivity analysis
 }
  
    \begin{tabular}{lccc}
          &  \multicolumn{1}{c}{Optimal case - Section \ref{SectUniqLock} } & \multicolumn{1}{c}{Optimal case - SA } \\ \hline \vspace{0.1cm}
    Containment and reopening level $\bar{c}$ & 76.7\% & 75.3\%  \\  \vspace{0.1cm}   
   Mortality at Day $85$ & 0.03\%  &  0.02\%    \\ \vspace{0.1cm}
    Total mortality at Day $460$ & 0.26\% & 0.21\%    \\ \vspace{0.1cm}
    Mortality reduction & 74.85\%  & 74.71\%  \\ \vspace{0.1cm}
    Annualized 1st quarter GDP loss & 3.29\%  & 3.30 \%   \\ \vspace{0.1cm} Total annualized GDP loss & 19.45\% & 20.55\%   \\ \vspace{0.1cm}   
     Value loss functional & 75.82 & 29.61  \\ \hline    
    \end{tabular}%
  \label{tab:sens_case1}%
\end{table}%

\begin{table}[htbp]
  \centering
  \caption{Comparison of the optimal situation of Section \ref{sec:opt_reopening} and of sensitivity analysis}
    \begin{tabular}{lccc}
          &  \multicolumn{1}{c}{Optimal case - Section \ref{sec:opt_reopening} } & \multicolumn{1}{c}{Optimal case - SA } \\ \hline \vspace{0.1cm}
    Reopening level $\bar{c}$ & 90.1\% & 93.9\%  \\  \vspace{0.1cm}   
   Mortality at Day $85$ & 0.03\%  & 0.03\%      \\ \vspace{0.1cm}
    Total mortality at Day $460$ & 0.63\% & 0.64\%    \\ \vspace{0.1cm}
    Mortality reduction & 38.43\% & 22.02\%  \\ \vspace{0.1cm}
    Annualized 1st quarter GDP loss & 4.30\%  & 4.31\%    \\ \vspace{0.1cm} Total annualized GDP loss & 12.02\%  & 9.41\%   \\ \vspace{0.1cm}   
     Value loss functional & 102.17 & 33.47  \\ \hline    
    \end{tabular}%
  \label{tab:sens_case2}%
\end{table}%

\begin{table}[htbp]
  \centering
  \caption{Comparison of the optimal situation of Section \ref{sec:opt_periodic_cont} and of sensitivity analysis}
    \begin{tabular}{lccc}
          &  \multicolumn{1}{c}{Optimal case - Section \ref{sec:opt_periodic_cont} } & \multicolumn{1}{c}{Optimal case - SA } \\ \hline \vspace{0.1cm}
    Successive reopening levels $\bar{c}$ & 72.9\% & 70.4\%   \\  \vspace{0.1cm}   
   Mortality at Day $85$ & 0.03\%  &  0.02\%    \\ \vspace{0.1cm}
    Total mortality at Day $460$ & 0.77\% & 0.63\%   \\ \vspace{0.1cm}
    Mortality reduction & 25.15\% &  24.98\%  \\ \vspace{0.1cm}
    Annualized 1st quarter GDP loss & 4.32\%  & 4.28\%    \\ \vspace{0.1cm} Total annualized GDP loss & 7.44\%  & 7.58\%   \\ \vspace{0.1cm}   
     Value loss functional & 107.41 & 39.56  \\ \hline    
    \end{tabular}%
  \label{tab:sens_case3}%
\end{table}%

\begin{figure}[h!]
    \centering
    \includegraphics[scale=0.3]{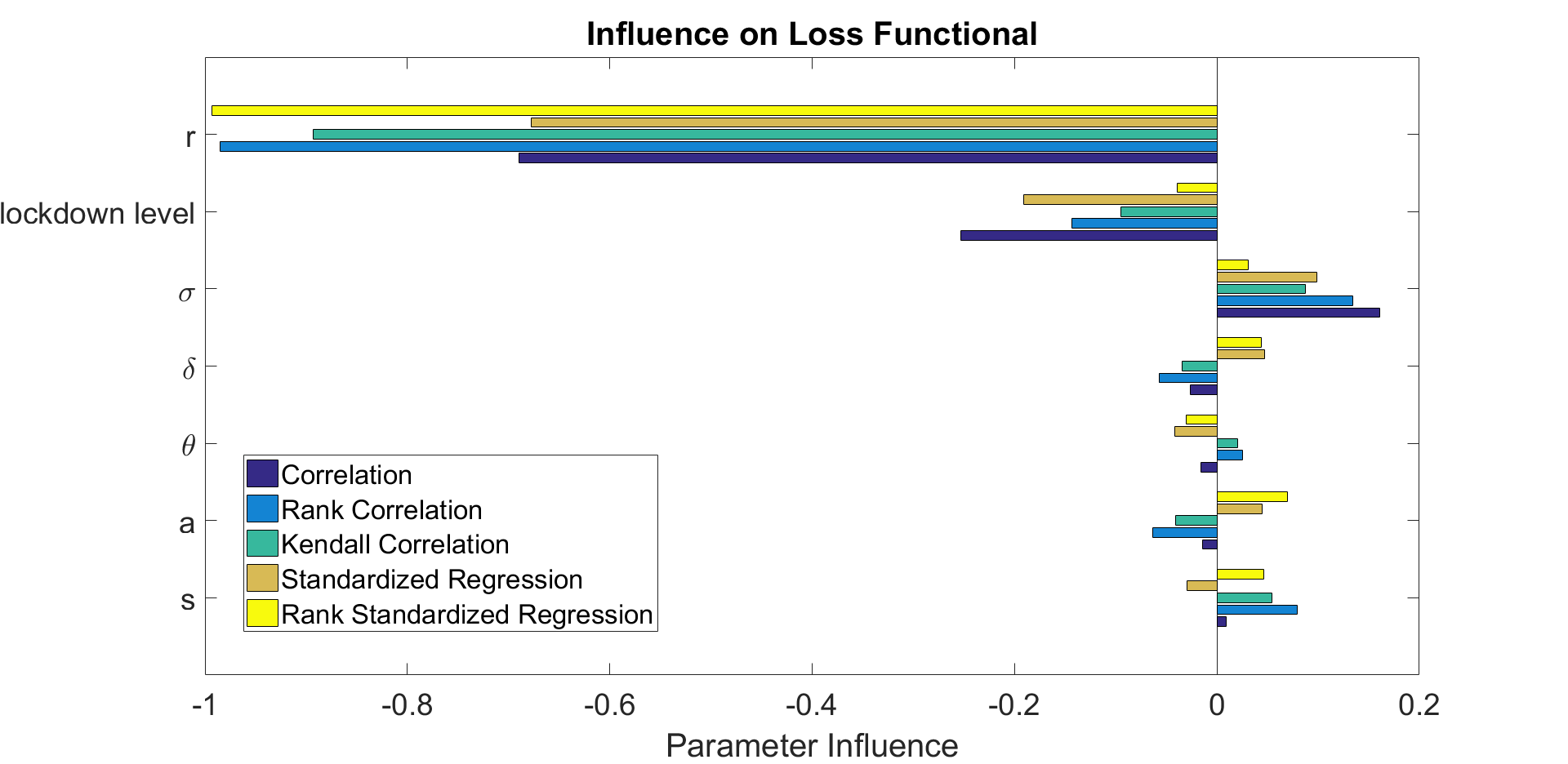}
    \caption{An example of the influence of the parameters on the loss functional for the first numerical experiment, with one unique lockdown.}
    \label{parameter_influence}
\end{figure}

\begin{table}[htbp]
	\centering
	\caption{Local sensitivity analysis for optimal unique lockdown experiment.}
	\begin{tabular}{lccc}
	     \multicolumn{1}{c}{Parameter } & \multicolumn{1}{c}{Range} & \multicolumn{2}{c}{   Optimal opening level at containment} \\         
	     &  & \multicolumn{1}{c}{at min range} & \multicolumn{1}{c}{at max range}  \\ \hline \vspace{0.1cm}
		 $\delta$ & $[0.0014, 0.0028]$ & 81.4\% & 76.7\%  \\   \vspace{0.2cm}
		$s$ & $[0.05, 0.15]$ & 78.7\% & 75\%    \\   \vspace{0.2cm}
		$r$ & $[0, 0.05]$ & 100\% & 77.2\%   \\   \vspace{0.2cm}
		$\sigma$   & $[1.01, 4]$ & 74.4\% & 80.3\% \\   \vspace{0.2cm}
		$\theta$  & $[\frac{10}{35}, \frac{1}{2}]$ & 79.7\% & 72.6\%   \\   \vspace{0.2cm}
		$a$  & $[8000, 20000]$ & 78.4\% & 71.4\%  \\  \hline \vspace{0.2cm}
	\end{tabular}%
	\label{tab:local_sen_case1}%
\end{table}%

\newpage
\section{Conclusions}

In this paper, we have formalized the trade-offs involved in the decision making between preserving economic activity and reducing the speed of diffusion of the pandemic. Our premise is that individual agents, as well as governments, want to contain and, possibly, postpone the infection and therefore the risk of a greater number of potential deaths to a later stage ("flatten the curve") in the expectation of better treatments, or a weakening of the virus, or a vaccine; we 
assume that actions are planned over a relatively short time horizon, that we choose to be 460 days.  
Our second working assumption is that there is a strong link between the degree of diffusion of the epidemic and the intensity of the economic shock\color{black}, with an elasticity that varies in time and across countries but seems to be in a range around 1/3. This elasticity is the result of all changes in behavior of agents, from the economic lockdown itself to the greater precautions of consumers who reduce their consumption and firms who favor drastic reductions in working time. 
 We have modeled containment measures by a
 function describing the level of opening, which we have taken to be piece-wise linear,
with additional regularities, to include feasibility;
we then formally described the trade-off between mortality reduction
and limitation of economic loss which
includes an estimation of the social cost $a$ of COVID-19 mortality,
and a discount rate which intensifies the 
effect of early deaths and early economic losses. 
\color{black} We discussed the
mathematical set-up and  proved the existence of at least
one optimal containment strategy.
A parametric representation of  mortality vs.
economic losses illustrates the potentialities
of the optimization approach.

Optimal control theory helps to find the right balance between the contrasting 
welfare needs during the COVID-19 epidemic,
even when limited to few free parameters. It also sheds light on the possibility of a non uniqueness of the
optimal control policy, very likely due to
the non convexity of the loss functional.
For instance, a  transition of phases takes place 
in terms of the parameter $a$
describing the social cost of COVID-19 mortality: at 
critical values of $a$, we  observed
the possibility of bifurcation towards two local optima and, therefore, discontinuous changes in the optimal policy as
function of $a$.
At and below the bifurcation, complete laissez-faire
is optimal, but it is never preferred when the statistical value of a life is large enough.

\color{black}

\medskip

Given that, for most countries,  the
implied value of the social cost of COVID-19 death $a$
is in a range in which laissez-faire is not a viable solution, we discussed the optimal policies in a restricted set where the opening level can vary only a very limited number of times and where the solution  turns out to be unique. \color{black} 
Parameters have been estimated from available data, and
a sensitivity analysis has been carried out on the main ones.
 We have analyzed various examples: one unique lockdown
 to be extended till the presumed end of the 
epidemic at the end of the first quarter 2021,
a strategy that apparently very few countries tried to plan; 
 a drastic, initial lockdown, followed by 
 a reopening, which is what most countries
 are currently putting in place;
 some alternation of containment and reopening
 after the current one, 
 which is a plausible outcome if the
regained activity leads to recurrence of the virus.
The results  shed some light on the trade-offs involved, and suggests that gradual policies of longer duration but more moderate containment have large welfare benefits.
On the other hand, after a sharp lockdown has been put
in place, an alternation of containment and
reopening is worth of consideration.
\color{black}

\medskip
Finally, we have investigated the sensitivity of the
results on the estimated parameters. For most parameters,
our results are insensitive to moderate errors 
in their selection. Among the significant ones, the
most relevant has turned out
to be the discount rate: this reflects the belief that
early economic loss is more damaging, and that early deaths
 harm the health system and miss the opportunity of
 some form of adaptation to the virus or
 more effective treatments. In the
 examples we have considered a very high
 value for the discount rate, as we believe that 
treatment improvements are very likely.
It follows that  the timing is key to successful implementation of a containment policy, and that this is
 closely tied to the
 the pace and of the perspectives of potential technological advancement.
 \\
\newline
\newline
\textit{New York University in Abu Dhabi, May 28, 2020.}\footnote{ We wish to thank Christian Gollier and Benjamin Moll for comments on this draft.}
 \color{black}

\clearpage
\small{
\bibliographystyle{apalike}
\bibliography{biblio.bib}
}
\newpage

\section*{{Appendix 1: A problem with capital and consumption with fixed
saving rate (Solow type)}}

\subsection*{{Social planner's objective}}

{The social planner's utility $\mathcal{W}$ combines
now consumption defined below. The social planner minimizes a loss
function between an initial period $t=0$ and final period $T$ which could be infinity:}

{
\[
\mathcal{L}=\left\{ \intop_{0}^{T}e^{-rt}\left[\mathcal{V}(C(t))+aD'(t)\right]dt\right\} 
\]
}

\subsection*{{Economy }}

{Production combines labor, capital and the lockdown
control strategy:}

{
\[
P=F[c(.),L,K]
\]
 where $F$ is a Cobb-Douglas of each input with capital elasticity
$\alpha$. } {Note that here, the lockdown control only
affects labor utilization, one could also put it outside the
labor block but this is equivalent here.}

{Consumers save an exogenous fraction $\sigma$ of
output and use it to invest in capital. They also consume the rest,
that is, }

{
\[
C(t)=(1-\sigma)P(t)
\]
}

\textit{{NB: as in \cite{jones2020optimal}, it is possible to add a lockdown control $c_{c}(t)$ on the transformation of production
into consumption: one forces agents to stop consuming and this reduces
$\beta$.}}

{Capital stock is accumulated thanks to savings and
depreciates at rate $\mu$ say 10\% yearly and so follows:
\begin{align*}
\frac{dK}{dt} & =-\mu K+\sigma F[c(t)L,K]
\end{align*}
}

{There is still a link between GDP and transmission,
the lockdown policy is denoted by $c(t)$:
\[
\beta_{t}=\bar{\beta}c(t)
\]
}

\subsection*{Optimal control problem}

The epidemic part is kept identical but adds one control $C(t)$ and one constraint:
$$
\lambda_{K}\left[-\mu K+\sigma F[c(t)L,K]\right]
$$

\section*{Appendix 2: A Ramsey first best problem}

Now, let consumption be endogenous too, so that the saving rate is
not constant.

{The social planner's utility $\mathcal{L}$ combines
now consumption defined below. The social planner minimizes the same
loss function as before, between an initial period $0$ and a
final period $T$ which could be infinity:}

{
\[
\mathcal{L}_{C(t),c(t)}=\left\{ \intop_{0}^{T}e^{-rt}\left[\mathcal{V}(C(t))+aD'(t)\right]dt\right\} 
\]
but now has two instruments: one is the lockdown control c(t), the
second one is the consumption by agents $C(t)$, which determines at which
rate the capital can be accumulated, namely:}

\[
\frac{dK}{dt}=-\mu K+F[c(t)L,K]-C(t)
\]
\begin{rem} Existence of an optimal control can be shown similarly as for the case treated in the paper in both examples. In fact, in the first example the functional is clearly continuous in $c$ since consumption $C$ and capital $K$ are continuous in $c$ and consumption can be assumed, without loss of generality, to be bounded below by a positive constant $C_0$. In the second case we can prove existence of a minimizing pair $c^*,C^*$ by compactness. In fact, $c\in \mathcal{K}$ and, by the properties of $c$ and of the solutions of the SEIARD model, consumption $C$ is a uniformly Lipschitz continuous. Also, we can assume that consumption takes values in a closed and bounded interval $[C_0,C_1]$. This allows us to minimize the functional over a compact subset of $C[0,T]\times C[0,T]$ where $C[0,T]$ indicates the space of continuous functions on the interval $[0,T]$. Finally, using the continuity of the functional with respect to $c,C$ the existence of an optimal pair $c^*,C^*$ follows.    
\end{rem}

\begin{rem} In other problems, such as the Ramsey second best problem, the social planner may not be able to allocate consumption properly. Instead, private agents in a market economy choose themselves their consumption, maximizing their own utility function, leading to an arbitrage between consumption in different dates, corresponding to the traditional Euler equation in macroeconomics. This constraint is an additional constraint to the social planner and at this stage, our results do not apply to them. 
\end{rem}

\color{black}
\newpage

\section*{Data Appendix}

\vspace{0.4cm}
\begin{table}[htbp]
  \centering

\caption{Fatality rates by age.}
\begin{tabular}{ccccc}
\hline 
\multicolumn{3}{c}{} &  & \tabularnewline
\hline 
\hline 
\multicolumn{5}{c}{{\footnotesize{}Fatality rates per age (in \%)}}\tabularnewline
{\footnotesize{}Age groups} & {\footnotesize{}China   \cite{verity2020estimates}} &  & {\footnotesize{}France \cite{salje2020estimating}} & \tabularnewline
\hline 
{\footnotesize{}0-9} & {\footnotesize{}0.00161} &  & {\footnotesize{}0.001 } & \tabularnewline
{\footnotesize{}10-19} & {\footnotesize{}0.00695} &  & {\scriptsize{}(for 0-19)} & \tabularnewline
{\footnotesize{}20-29} & {\footnotesize{}0.0309} &  & {\footnotesize{}0.007} & \tabularnewline
{\footnotesize{}30-39} & {\footnotesize{}0.0844} &  & {\footnotesize{}0.02} & \tabularnewline
{\footnotesize{}40-49} & {\footnotesize{}0.161} &  & {\footnotesize{}0.05} & \tabularnewline
{\footnotesize{}50-59} & {\footnotesize{}0.595} &  & {\footnotesize{}0.2} & \tabularnewline
{\footnotesize{}60-69} & {\footnotesize{}1.93} &  & {\footnotesize{}0.8} & \tabularnewline
{\footnotesize{}70-79} & {\footnotesize{}4.28} &  & {\footnotesize{}2.2} & \tabularnewline
{\footnotesize{}80+} & {\footnotesize{}7.8} &  & {\footnotesize{}8.3} & \tabularnewline
 &  &  &  & \tabularnewline
{\footnotesize{}Less than 60} & {\footnotesize{}0.145} &  & {\footnotesize{}na} & \tabularnewline
{\footnotesize{}More than 60} & {\footnotesize{}3.28} &  & {\footnotesize{}na} & \tabularnewline
{\footnotesize{}Overall} & {\footnotesize{}0.657} &  & {\footnotesize{}0.53} & \tabularnewline
\hline 
\multicolumn{5}{c}{{\tiny{}Note: These figures refer to the ratio of probable deaths to infected population.}}\tabularnewline
\end{tabular}
  \label{tab:Fatality_Rates_Age}%
  
\end{table}

\vspace{0.4cm}

\end{document}